 \def\ps@pprintTitle{%
  \let\@oddhead\@empty
  \let\@evenhead\@empty
  \def\@oddfoot{}%
  \let\@evenfoot\@oddfoot}
 \newdefinition{definition}{Definition}
 \newtheorem{theorem}{Theorem}
 \newtheorem{lemma}[theorem]{Lemma}
 \newtheorem{proposition}[theorem]{Proposition}
 \newproof{proof}{Proof}
  \newcommand{\R}{\mathbb{R}}
  \newcommand{\eps}{\epsilon}
  \newcommand{\terrain}{T}
  \newcommand{\terrainheight}{t}
  \newcommand{\altterrain}{T'}
  \newcommand{\altterrainheight}{t'}
  \newcommand{\simpterrain}{S}
  \newcommand{\baseterrain}{B}
  \newcommand{\simpterrainheight}{s}
  \newcommand{\baseterrainheight}{b}
  \newcommand{\linktri}{L}
  \DeclareMathOperator{\link}{Lk}
\begin{document}

\twocolumn[{
\begin{frontmatter}

\title{Topology-Preserving Terrain Simplification}

\author[mymainaddress]{Ulderico Fugacci}

\author[mysecondaryaddress]{Michael Kerber}

\author[mythirdaddress]{Hugo Manet}

\address[mymainaddress]{IMATI - CNR, Genova, Italy}
\address[mysecondaryaddress]{Institute of Geometry, TU Graz, Graz, Austria}
\address[mythirdaddress]{D\'epartement d'Informatique, \'Ecole Normale Sup\'erieure, Paris, France}

\begin{abstract}
We give necessary and sufficient criteria for elementary operations in a two-dimensional terrain to preserve the persistent homology induced by the height function. These operations are edge flips and removals of interior vertices, re-triangulating the link of the removed vertex. This problem is motivated by topological terrain simplification, which means removing as many critical vertices of a terrain as possible while maintaining geometric closeness to the original surface. Existing methods manage to reduce the maximal possible number of critical vertices, but increase thereby the number of regular vertices. Our method can be used to post-process a simplified terrain, drastically reducing its size and preserving its favorable properties.
\end{abstract}

\begin{keyword}
Persistent Homology, Terrain, Terrain Simplification, Triangulation.
\end{keyword}


\end{frontmatter}
}]


\section{Introduction}
\label{sec-introduction}

A terrain is a triangulated surface described by a scalar function defined on a finite set of points of $\R^2$ and visualized as a height function of some topographic dataset, representing the mountains and valleys of a landscape. Terrains are a popular model to represent landscapes
and play a fundamental role in areas such as cartography, computer graphics
and computer vision.

In this work, we consider the problem of \emph{terrain simplification},
also known as \emph{terrain approximation}.
We refer to~\cite{hg-survey,talton-survey,cms-comparison}
for extensive surveys on the topic and focus on the concepts important for
our work.
Generically, given a terrain $T$ over a domain, it asks for a ``simpler''
triangulation $T'$ of the same domain which constitutes a ``good'' approximation
of the original terrain $T$. To make this setup concrete, we must specify the meaning of ``good'' and ``simple''. For the former,
a standard choice (which we also adopt throughout the paper) is to impose
a maximal pointwise vertical distance between original and approximate terrain.
Formally, for $\eps>0$, we want that $\|T-T'\|_\infty <\eps$,
where $T$ and $T'$ are interpreted as scalar functions on a common domain.
Another possible choice is to impose a quality criterion on the triangular
mesh of $T'$ (e.g., being a Delaunay mesh).

For defining ``simplicity'', a default choice is ask for a terrain with
fewer vertices, but natural formulations
for this optimization problem are NP-hard (see the related work section below).
\emph{Topological simplification} is an alternative way to define simplicity,
where the goal is to reduce the number of \emph{critical vertices} (minima,
maxima, saddles) of the terrain. Unlike in the previous case, the
optimization problem is tractable:
the algorithm by Bauer, Lange, and Wardetzky~\cite{blw-optimal}
(called BLW algorithm from now on)
produces an $L_\infty$-close terrain
with the minimal number of critical points.
As discussed in~\cite{blw-optimal}, such a ``topologically clean'' approximation is useful in many applications:
for instance, in order to identify drainage basins, it is desirable to
remove spurious minima that lead to a too fine fragmentation of the terrain.
However, the drawback of the BLW algorithm
is that realizing the topological simplification as a terrain requires
a barycentric subdivision of the original triangulation,
which increases the size of the terrain by a factor of~$6$ and thus
results in a severe performance penalty of subsequent steps.

The BLW algorithm makes use of the popular concept
of~\emph{persistence diagrams}~\cite{edelsbrunner2010computational}.
Such a diagram partitions the critical points into pairs $(p,q)$
such that $p$ and $q$ can both be removed from the terrain using
a pointwise perturbation of \emph{persistence} $\Delta(p,q)$,
which is the height difference
of the two critical points. Bauer et al.~show that it is possible to
remove \emph{all} critical pairs with $\Delta(p,q)\leq 2\eps$
with a single $\eps$-perturbation. Since critical point pairs
with $\Delta(p,q)>2\eps$ cannot be cancelled in this way due
to the stability of persistent homology~\cite{ceh-stability},
this proves the optimality.

\paragraph{Contributions}
We investigate the following question: under what conditions does the persistence diagram
of the terrain remain the same when
(1) flipping an edge, or (2) removing a (regular) vertex and triangulating its link?
For both cases, we propose sufficient and necessary conditions that can be checked
locally (Section~\ref{sec-theory}). While the edge flip condition can be checked
in constant time, the vertex removal condition can be implemented
with a $O(d^3)$ algorithm with dynamic programming, where $d$ is the degree
of the removed vertex.
 Both tests can be easily combined with testing for pointwise closeness,
that is, whether an edge flip/a vertex removal yields a $L_\infty$-close
terrain with the same persistence.

Using the above test, we suggest a simple post-processing procedure
for the output terrain of the BLW algorithm (Section~\ref{sec-algorithm}): traverse the
regular vertices and greedily remove vertices without changing the persistence
diagram, always maintaining an $L_\infty$-close terrain, until no further
vertex can be removed. Note that maintaining the persistence diagram
implies that the number of critical points remains the same. Hence,
the result of our post-processing still achieves the minimal number
of critical points, but is smaller in size.

We experimentally evaluate our method (Section~\ref{sec-experiments}).
For instance, on a terrain with $100$K
vertices, the BLW algorithm yields a topologically clean terrain
with about $600$K vertices. Our post-processing yields a topologically equivalent
terrain that only consists of $11$K vertices.
Hence, our method addresses the major drawback of the BLW algorithm
of returning a too large terrain.

The above example can be computed in about 2 minutes on a workstation.
Achieving this running time requires several algorithmic
ideas. One of them is a heuristic improvement of the BLW algorithm
to avoid computing the entire barycentric subdivision.
Our implementation
is based on \textsc{Cgal}'s arrangement package~\cite{cgal-arrangements}
and uses exact number types for numerical computations. The code is available in a public repository\footnote{https://bitbucket.org/mkerber/terrain\_simplification.}.

\paragraph{Motivation and further related work}
Bajaj and Schikore~\cite{bs-topology} outline a reduction method that is
similar to ours: they propose to remove vertices from the terrain
and re-triangulate, such that the resulting terrain is $L_\infty$-close
and all vertices retain their ``criticality type'' (i.e., being regular,
a minimum, saddle or maximum). We show that preserving the persistence
diagram is equivalent to this condition. While they outline a method
for checking whether a re-triangulation meets their conditions,
they do not describe how to find such a triangulation efficiently,
and they do not report on experimental results.

Our approach is motivated by the area of
hierarchical models of terrains, where one aims
for a multi-resolution representation of a terrain that
should reflect the essential properties of the terrain
at various levels of detail.
To create such representation, it is useful to act on geometric and topological properties of the terrain
separately.
Several algorithms for modifying the topology
while maintaining the triangulation of the terrain
have been proposed \cite{Brem04, Wein09, Gyul11, Gunt12, Comi13-ismm, Felle14};
our approach can be seen as acting in the other direction, maintaining topology
and simplifying the triangulation.
Formerly proposed
methods described in \cite{iuricich2017hierarchical, Dey201833} aim for a similar goal, but they focus on the
restricted case of edge contractions and on how such operators affect discrete Morse gradient vector fields.



The BLW algorithm is an improvement over previous simplification
algorithms by Attali et al.~\cite{aghlm-persistence} and Edelsbrunner et al.~\cite{emp-persistence} which are not guaranteed to remove
all critical points of persistence $\leq 2\eps$.
As with the BLW algorithm,
these algorithms require a barycentric subdivision, so our method can be
combined with these algorithms as well.

Our work investigates the practical aspects of terrain
reduction and does not discuss the optimality of our (greedy) removal
strategy. Without topological constraint, the problem
is already difficult;
precisely,
the problem of finding the smallest terrain that is $L_\infty$-close to an input terrain is NP-hard~\cite{as-surface}
but approximation algorithms~\cite{ad-efficient}
and many heuristic approaches with weak or no guarantees on the size of the approximation (e.g., \cite{eck1995multiresolution}) exist.


\section{Background notions}\label{sec-background}

\paragraph{Triangulated terrains}\label{sub:complexes}
A {\em terrain} $\terrain$ is a triangulation of a compact polygonal region in $\R^2$, possibly with internal vertices, endowed with an injective scalar function $\terrainheight: V \rightarrow \R$ called {\em height function} defined on its set of vertices $V$; the injectivity of $\terrainheight$ is just assumed for simplicity in the write-up but our implementation does not require it.
Using barycentric coordinates, we can piecewise-linearly extend the height function $\terrainheight$ to the entire domain of $\terrain$. Based on that, a terrain $\terrain$ is always associated with a triangulated surface in $\R^3$ obtained as the graph of such an extended height function (see Figure \ref{fig:terrain} for an example). Whenever this will not cause any ambiguity, we will make no distinction between considering a terrain $\terrain$ as a triangulation in $\R^2$ endowed with a height function $\terrainheight$ or as the corresponding surface.
Given two terrains $\terrain$ and $\altterrain$ defined on the same polygonal region $D$, we define $\|\terrain-\altterrain\|_\infty:=\max_{p\in D} |\terrainheight(p)-\altterrainheight(p)|$.

\begin{figure}[!htb]
	\centering
        \includegraphics[width=.4\linewidth]{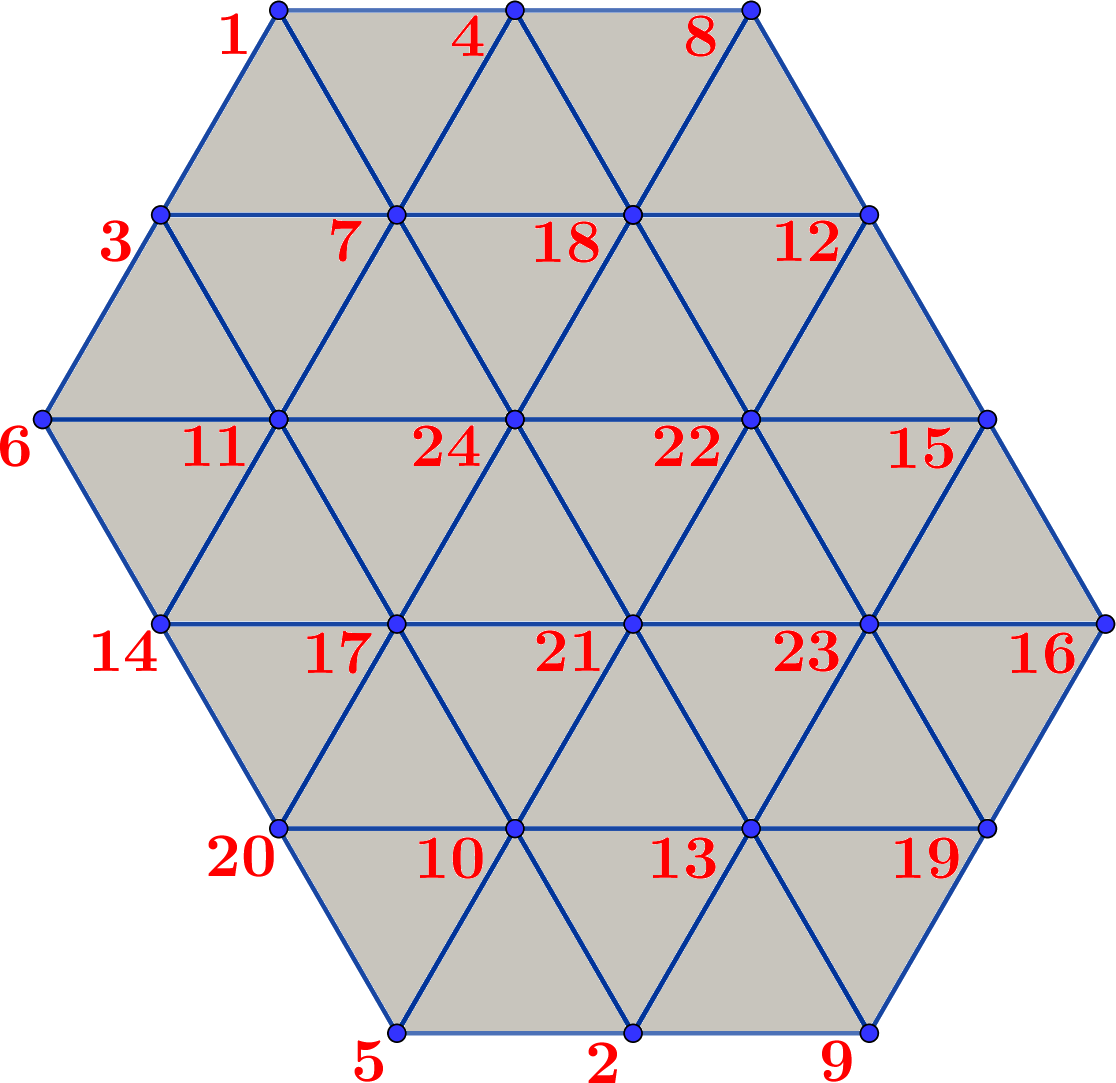}
        \hspace{0.8cm}
	\includegraphics[width=.45\linewidth]{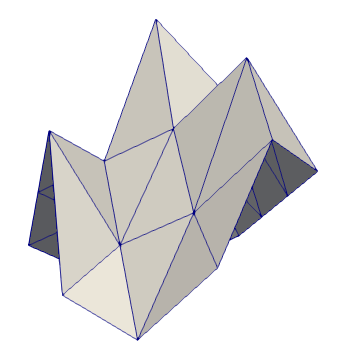}

	\caption{A terrain $\terrain$ represented as a triangulation of a polygonal region in $\R^2$ endowed with an injective scalar function $\terrainheight: V \rightarrow \R$ (left) and as a surface in $\R^3$ (right).}
  \label{fig:terrain}
\end{figure}

A vertex $v$ is called \emph{interior} if it is not on the boundary of the
triangulated domain.
For an interior vertex $v$, the {\em link} $\link(v)$ of $v$ consists of all the vertices adjacent to $v$ as well as all the edges $e=ab$ of $\terrain$ such that $abv$ is a triangle in $\terrain$.
The {\em lower link} $\link^-(v)$ of a vertex $v$ of $\terrain$ is the collection of vertices $u$ and edges $e=ab$ in $\link(v)$ such that $\terrainheight(u)\leq \terrainheight(v)$ and $\max\{\terrainheight(a), \terrainheight(b)\} \leq \terrainheight(v)$, respectively.
Analogously, the {\em upper link} $\link^+(v)$ is the collection of vertices and edges in $\link(v)$ satisfying the above equations in which $\leq$ is replaced with $\geq$.
Let us call an interior vertex $v \in \terrain$ {\em regular} if both $\link^-(v)$ and $\link^+(v)$ are non-empty and connected.
Otherwise, $v$ will be called {\em critical}.
As an example, in Figure \ref{fig:terrain}(a), vertex 10 is regular and vertices 17 and 24 are critical.

\paragraph{Persistent homology of a terrain}\label{sub:homology}
Given a value $\alpha \in \R$,
we write $L_\alpha$ for all points in the domain whose height is at most $\alpha$. Then $L_\alpha\subseteq L_\beta$, and we call $(L_\alpha)_{\alpha\in\R}$
the \emph{piecewise-linear (PL) filtration} of the terrain.

\begin{figure}[!htb]
	\centering
	\begin{tabular}{ccc}
		\includegraphics[width=.43\linewidth]{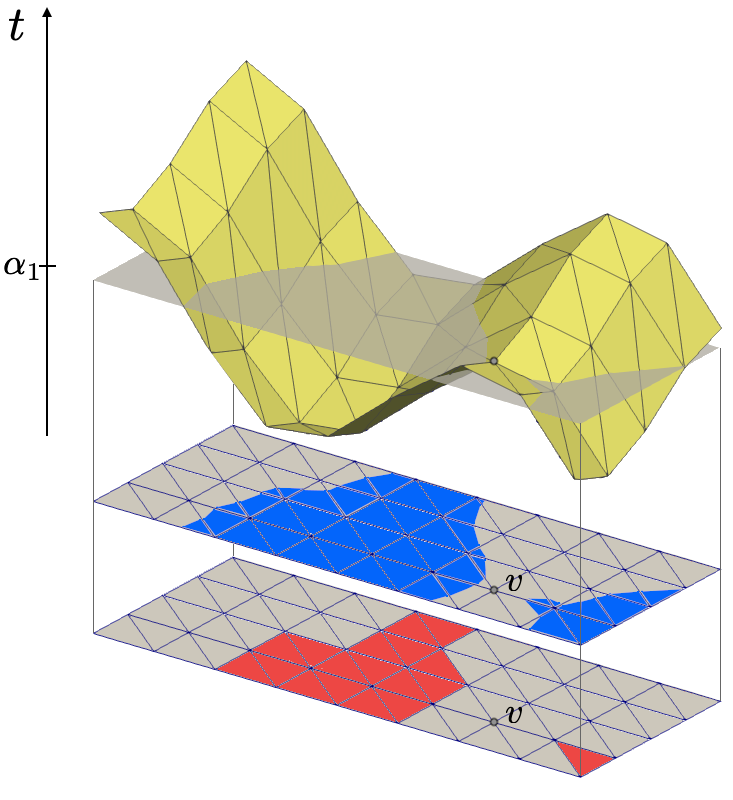} & &
		\includegraphics[width=.43\linewidth]{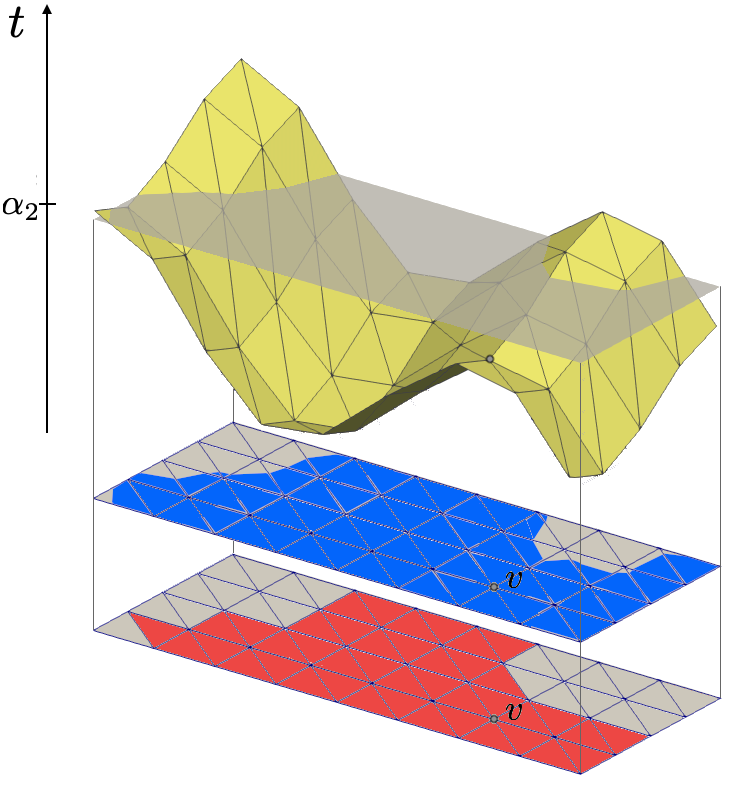} \\

		(a) & & (b)\\
	\end{tabular}
	\caption{The sublevel sets of the PL filtration (in blue) and of the simplex-wise filtration (in red) of a terrain $\terrain$ obtained for values $\alpha_1$ (a) and $\alpha_2$ (b).
	Independently from the chosen filtration, persistent homology records that the number of connected components of the blue (or red) domain decreases by one passing from $\alpha_1$ to $\alpha_2$.
}
\label{fig:persistence}
\end{figure}

\emph{Persistent homology}~\cite{elz-topological,oudot-book,edelsbrunner2010computational} enables us to study the topological changes occurring during a filtration. In our concrete case, persistence tracks the evolution of connected
components and holes in the terrain while the height function is increasing
(see Figure \ref{fig:persistence}).
It can be proven that, modulo a suitable handling of the boundary vertices,
there is a one-to-one correspondence between the critical points of $\terrain$ and the homological changes in its filtration~\cite{blw-optimal}.
The information gathered by persistent homology is summarized in a combinatorial
structure called the \emph{persistence diagram} (or equivalently, the \emph{barcode}).

It will be convenient to work with a different filtration in our setting.
For each point $p$ of a terrain $\terrain$,
we define $\sigma_p$ as the lowest-dimensional cell (vertex, edge, triangle)
that contains $p$. Assuming that $\sigma_p$ is spanned by boundary vertices
$v_1,\ldots,v_i$ with $i\in\{1,2,3\}$, we set $s_p:=\max\{\terrainheight(v_1),\ldots,\terrainheight(v_i)\}$,
and $S_\alpha:=\{p\mid s_p\leq\alpha\}$, and call $(S_\alpha)_{\alpha\in\R}$
the \emph{simplex-wise filtration} of the terrain.
Note that $s_p$ is not continuous in $p$, so the sets $S_\alpha$
change discontinuously at vertex values. Nevertheless, we still have that
$S_\alpha\subseteq S_\beta$, so the persistent homology of the simplex-wise
filtration is well-defined.
See Figure~\ref{fig:persistence} for an illustration.

It is well-known (e.g., ~\cite{blw-optimal}) that the PL filtration
and the simplex-wise filtration yield identical persistence diagrams.
This follows from the fact that, for every $\alpha$,
$S_\alpha\subseteq L_\alpha$ and, moreover, there is a deformation retraction~\cite{hatcher} from $L_\alpha$ to $S_\alpha$. This implies that the inclusion
map induces an isomorphism of homology groups. We will use the same
condition for the equality of persistence diagrams several times in this text.

\section{Persistence-preserving operations on terrains}
\label{sec-theory}

\paragraph{Persistence-preserving edge flip}

Given a terrain $\terrain$ and an interior edge $e=ab$, let $abc$ and $abd$ be the two triangles incident to $ab$ such that the quadrilateral $acbd$ is convex.
In that case, we call an \emph{edge flip} the operation of removing $ab$ and its two adjacent triangles from the terrain and replacing them
with the edge $cd$ and the triangles $acd$ and $bcd$, resulting in a new terrain $\altterrain$ (with coincides with $\terrain$ outside the quadrilateral $acbd$).

For two vertices $x,y$ of $\terrain$, we define the real interval
\[I_{xy}:=[\min\{\terrainheight(x),\terrainheight(y)\},\max\{\terrainheight(x),\terrainheight(y)\}].\]
We call the edge $ab$ \emph{topologically flippable} if $I_{ab}\cap I_{cd}\neq\emptyset$.
See Figure~\ref{fig:edge_flip_proof} for an example.

\begin{figure}[!htb]
  \centering
  \includegraphics[width=.47\textwidth]{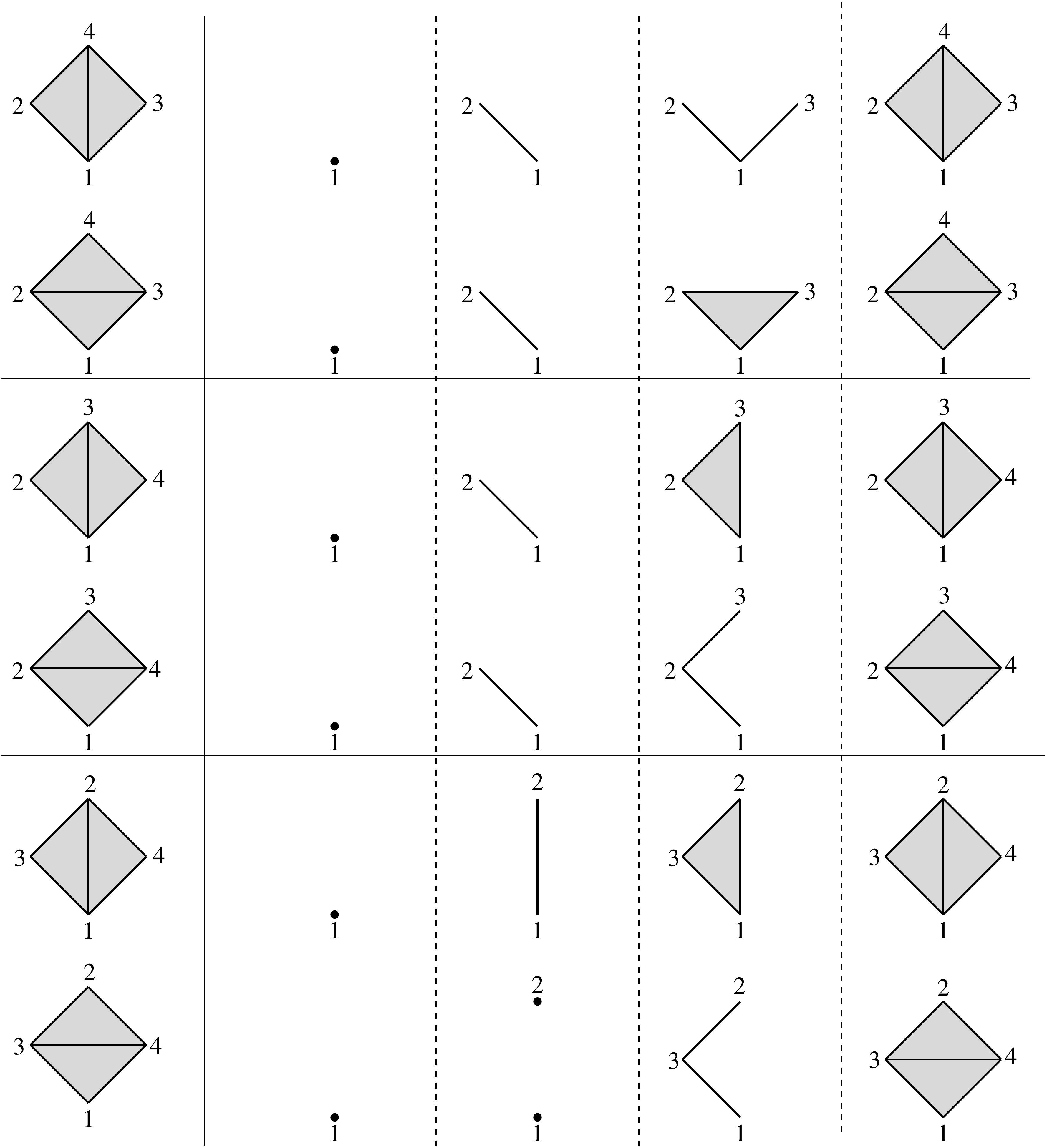}
  \caption{Left: in the first four rows, the diagonal of the quadrilateral is topologically flippable,
in the last two rows not. On the right, the simplex-wise filtration of the triangulation is displayed.}
  \label{fig:edge_flip_proof}
\end{figure}

\begin{proposition}\label{pro:flip}
Let $\terrain$ be a terrain with an edge $ab$ as above, and let $\altterrain$ denote the terrain after flipping the edge $ab$.
The edge $ab$ is topologically flippable if and only if the terrains $\terrain$ and $\altterrain$ have the same persistence diagrams.
\end{proposition}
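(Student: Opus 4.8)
The plan is to use the criterion mentioned just before Section~\ref{sec-theory}: two terrains have the same persistence diagram if, for every threshold $\alpha$, the sublevel sets of the simplex-wise filtrations are related by an inclusion inducing an isomorphism on homology. Since $\terrain$ and $\altterrain$ agree outside the quadrilateral $acbd$, everything reduces to a local comparison: for each $\alpha$, let $Q_\alpha$ (resp.\ $Q'_\alpha$) be the part of the simplex-wise filtration of $\terrain$ (resp.\ $\altterrain$) supported on the closed quadrilateral. The rest of the terrain is literally identical, so by a Mayer--Vietoris / excision argument it suffices to show that for all $\alpha$ the inclusion of the common boundary $\partial$-data (the four boundary edges $ac, cb, bd, da$ and their vertices, filtered by $s_p$) into $Q_\alpha$ and into $Q'_\alpha$ induces the same map on homology, i.e.\ the pairs $(Q_\alpha,\text{boundary})$ and $(Q'_\alpha,\text{boundary})$ are homologically indistinguishable for every $\alpha$. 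Concretely I would argue that $Q_\alpha$ and $Q'_\alpha$ are each built from the same vertex set at the same times, and each becomes the full filled quadrilateral once $\alpha\ge\max\{\terrainheight(a),\terrainheight(b),\terrainheight(c),\terrainheight(d)\}$; the only place where the two can differ is in the window of $\alpha$-values between the third-highest and highest vertex heights, where one diagonal is present as an edge and the other is not.

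The computational heart is therefore a finite case analysis on the ordering of the four heights $\terrainheight(a),\terrainheight(b),\terrainheight(c),\terrainheight(d)$, which by symmetry (swapping $a\leftrightarrow b$, $c\leftrightarrow d$, and the roles of the two diagonals) collapses to a handful of cases; these correspond exactly to the rows of Figure~\ref{fig:edge_flip_proof}. For each ordering I would track, as $\alpha$ increases past each vertex height, whether the local sublevel set $Q_\alpha$ (and $Q'_\alpha$) is empty, one point, two points, an arc, an annulus-like arc-with-hole, or the filled quad, and similarly record when the connected components merge or a loop is born/filled. The key observation to isolate is this: a diagonal $xy$ only affects the filtration during the half-open interval $[\,\text{second-largest of the two endpoint constraints}\,,\,\max\{\terrainheight(x),\terrainheight(y)\})$ — more precisely the edge $xy$ enters the simplex-wise filtration at time $\max\{\terrainheight(x),\terrainheight(y)\}$, and the triangles of that diagonal enter no earlier than the largest of all four heights — so the two diagonals become ``active'' precisely on the intervals $I_{ab}$-related and $I_{cd}$-related windows. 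When $I_{ab}\cap I_{cd}=\emptyset$, one diagonal has already done its job (merged the two sides of the quad) before the other is even available, and I would show that swapping diagonals then creates a genuinely different merging order, hence a different persistence pair inside the quad — this gives the ``only if'' direction. When $I_{ab}\cap I_{cd}\ne\emptyset$, I would show that at every $\alpha$ the pair $(Q_\alpha,\text{boundary})$ has the same homology with or without the swap, because in the overlap window either both local pictures are already connected-and-simply-connected, or neither diagonal is present yet, so the homology of $Q_\alpha$ does not depend on which diagonal is drawn.

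For the ``only if'' direction more carefully: I would argue the contrapositive. Assume $I_{ab}\cap I_{cd}=\emptyset$, say $\max\{\terrainheight(a),\terrainheight(b)\}<\min\{\terrainheight(c),\terrainheight(d)\}$ (the reverse is symmetric). Then pick $\alpha$ with $\max\{\terrainheight(a),\terrainheight(b)\}\le\alpha<\min\{\terrainheight(c),\terrainheight(d)\}$: in $\terrain$ the edge $ab$ is present (both endpoints are $\le\alpha$), so $a$ and $b$ lie in the same component of $S_\alpha$; in $\altterrain$ the edge $cd$ requires $c,d\le\alpha$, which fails, so within the quadrilateral $a$ and $b$ can only be connected through the outside. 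I would then show this forces a difference in the rank of $H_0(S_\alpha)$ versus $H_0(S'_\alpha)$ — unless $a$ and $b$ happen to already be joined externally, in which case I instead look at $H_1$: in $\altterrain$ the two boundary arcs $a$--$c$--$b$ and $a$--$d$--$b$ together with the external connection form a loop that is unfilled at this $\alpha$ (no triangle of the quad is present), producing a $1$-cycle that dies later, whereas in $\terrain$ the edge $ab$ subdivides this region and kills that cycle at a different (earlier) time. Either way the persistence diagram records a pair with one coordinate equal to $\max\{\terrainheight(a),\terrainheight(b)\}$ in $\terrain$ but not in $\altterrain$.

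The main obstacle I anticipate is the bookkeeping in the ``if'' direction: one must verify, cleanly and without an explosion of sub-subcases, that across the entire overlap window the inclusion of the quadrilateral's boundary into $Q_\alpha$ induces the same map on $H_0$ and $H_1$ regardless of the chosen diagonal, and then assemble these local isomorphisms into a global isomorphism of persistence modules. The right tool is a relative/excision argument: since $\terrain\setminus\mathrm{int}(acbd)=\altterrain\setminus\mathrm{int}(acbd)$ as filtered spaces, the five lemma applied to the long exact sequences of the pairs $(S_\alpha, S_\alpha\setminus\mathrm{int})$ and $(S'_\alpha, S'_\alpha\setminus\mathrm{int})$ reduces the whole claim to the statement that $H_*(Q_\alpha,\partial Q_\alpha)\cong H_*(Q'_\alpha,\partial Q'_\alpha)$ compatibly for all $\alpha$, which is then a genuinely finite check. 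I would also need to confirm these isomorphisms commute with the filtration maps $S_\alpha\hookrightarrow S_\beta$, but since outside the quad nothing changes and the isomorphism is canonical on the boundary, naturality is automatic.
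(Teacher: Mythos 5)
Your proposal is correct in substance and rests on the same core mechanism as the paper's proof: reduce to the quadrilateral $acbd$ (the two terrains coincide elsewhere), run a finite case analysis over the orderings of the four heights, and for the ``only if'' direction observe that when $I_{ab}\cap I_{cd}=\emptyset$ there is a threshold at which the two local sublevel sets differ by exactly one edge, forcing a Betti number to differ. The paper reaches that last conclusion in one line via the Euler--Poincar\'e formula (one extra edge changes the Euler characteristic, so the homology cannot agree), which is exactly your ``merges two components or creates a cycle'' dichotomy but without the case split; note that your description of the $H_1$ case is slightly off --- at the chosen $\alpha$ the arcs $a$--$c$--$b$ and $a$--$d$--$b$ are \emph{not} in the sublevel set (since $\terrainheight(c),\terrainheight(d)>\alpha$), and the relevant cycle lives in $\terrain$, formed by the diagonal $ab$ together with the external connection, with no local counterpart in $\altterrain$. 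Where you genuinely diverge is the ``if'' direction. You propose excision, relative homology of the pairs $(Q_\alpha,\partial Q_\alpha)$, and the five lemma; the paper instead observes that in every topologically flippable configuration one simplex-wise filtration is \emph{contained} in the other at every threshold, and that this inclusion is a deformation retraction at the single value where they differ. That gives an honest map of filtered spaces inducing isomorphisms on homology at every level, from which equality of persistence diagrams follows immediately --- no Mayer--Vietoris, no naturality to verify by hand. Your route can be made to work, but be aware that ``both local pictures are contractible'' is not by itself sufficient (they could attach to the boundary differently), and the five-lemma comparison needs an actual map between the two long exact sequences; the inclusion the paper exploits is precisely that map, so you would do well to notice it and let it carry the whole argument.
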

\begin{proof}
We can, without loss of generality, assume that the height values of $a,b,c,d$ in $\terrain$ are $\{1,2,3,4\}$ and that the height of $a$ is equal to $1$.
Then, there only exist three combinatorial configurations for the quadrilateral $acbd$, which are depicted in the leftmost column of rows $1$, $3$, and $5$ of Figure~\ref{fig:edge_flip_proof}. The edge $ab$ is the vertical edge in each picture. Rows $2$, $4$, and $6$ show the corresponding flipped situation.

Using the fact that the persistence diagram of a terrain is determined by its simplex-wise filtration, it is enough to compare
the simplex-wise filtrations at the critical values $1$, $2$, $3$, and $4$. The right-hand columns of Figure~\ref{fig:edge_flip_proof} show the sublevel sets of the filtrations at these values
inside the quadrilateral $acbd$ (outside the quadrilateral, both terrains coincide).

Comparing row $1$ and $2$, we observe that the filtration of $\terrain$ is included in the filtration of $\altterrain$ for each critical value (even though that inclusion does not
respect the simplicial structure in the rightmost column).
We observe that the inclusion from $\terrain$ into $\altterrain$
is a deformation retract at value $3$, and trivially also at any other position
because the filtrations are equal everywhere else.
It follows that the two filtrations have the same persistence diagram.
The same argument can be applied to row $3$ and $4$, switching the roles of $\terrain$ and $\altterrain$.
This proves the ``if'' part of the statement.

Comparing row $5$ and $6$, we observe for critical value $2$, the sublevel sets of $\terrain$ and $\altterrain$ differ by exactly one edge.
By the Euler-Poincar{\'e} formula~\cite{edelsbrunner2010computational}, the homology groups of the two sublevel sets differ,
which implies that the persistence diagrams are not equal. This proves the ``only if'' part.
\end{proof}

The proof also reveals that if an edge is not topologically flippable, the
two persistence diagrams differ by at most the distance of $I_{ab}$ and
$I_{cd}$ in bottleneck distance (see~\cite{edelsbrunner2010computational}
for the definition). This might be of interest in variants of simplification
where a small change of persistence is acceptable.

\paragraph{Persistence-preserving vertex removal}

Given a terrain $\terrain$ and an interior vertex $v$ of $\terrain$ with exactly $d$ incident edges, the \emph{vertex removal} of $v$ is the operator which removes $v$ from $\terrain$ together with its $d$ incident edges and triangles,
and triangulates the link of $v$ using a set of $d-3$ diagonals, resulting in the terrain $\altterrain$.
See Figure~\ref{fig:removal_illu} for an illustration.

\begin{figure}[!htb]
\centering
\includegraphics[width=.47\textwidth]{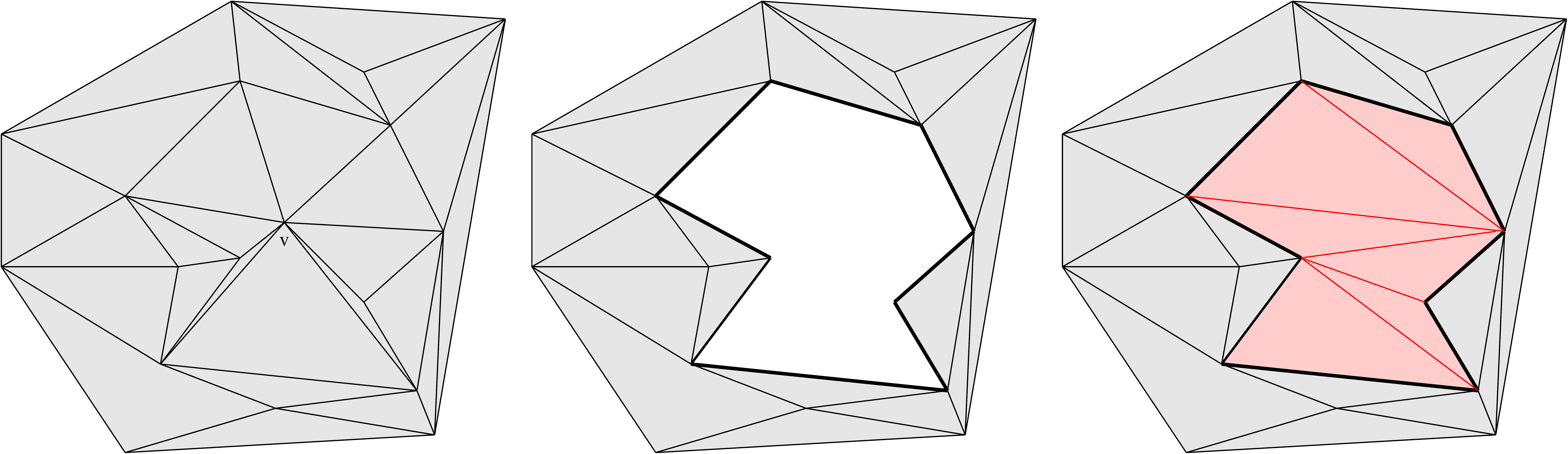}
\caption{Left: a triangulation with a vertex $v$ of degree $8$. Middle: the vertex and all its incident edges and triangles removed. The link of $v$ is
drawn thicker. Right: a possible re-triangulation of the link  with $5$ diagonals.}
\label{fig:removal_illu}
\end{figure}

Again, we are interested in circumstances under which the persistence diagrams of $\terrain$ and $\altterrain$ coincide. A necessary condition is that $v$ is regular,
as one can readily check. If the link of $v$ is a triangle, it is also sufficient.

\begin{figure}[!htb]
\centering
\includegraphics[width=.47\textwidth]{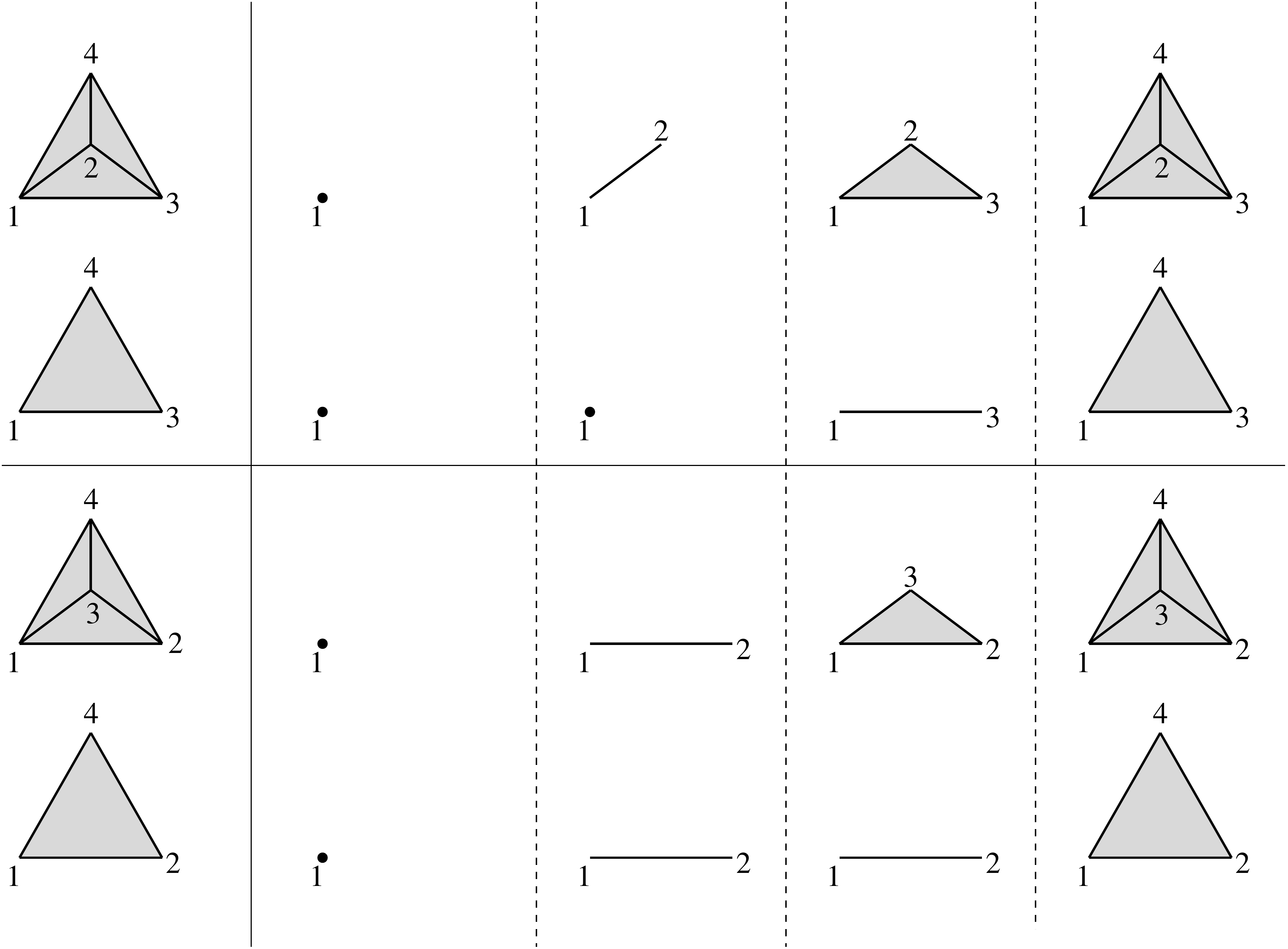}
\caption{Row $1$ and $3$: the two possible configurations of a regular vertex with a triangular link, and their simplex-wise filtrations.
Row $2$ and $4$: the same with the vertex removed.}
\label{fig:triangular_proof}
\end{figure}

\begin{proposition}\label{pro:triangle}
Let $v$ be a regular vertex of $\terrain$~with three incident edges. Then, the vertex removal of $v$ yields a terrain $\altterrain$ with the same persistence diagram as $\terrain$.
\end{proposition}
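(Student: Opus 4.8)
The plan is to follow the template of the proof of Proposition~\ref{pro:flip}: reduce everything to the simplex-wise filtration and exhibit, at each critical value, a deformation retraction between the two sublevel sets. Since $v$ has exactly three incident edges, its link is a single triangle $abc$, so the vertex removal just replaces the three triangles $vab,vbc,vca$ by the triangle $abc$ (there are $d-3=0$ diagonals to add), and $\terrain$ and $\altterrain$ agree outside the closed triangle $abc$. First I would record the combinatorial constraint imposed by regularity: since $\link^-(v)$ and $\link^+(v)$ are both non-empty, $\terrainheight(v)$ must lie strictly between $\min\{\terrainheight(a),\terrainheight(b),\terrainheight(c)\}$ and $\max\{\terrainheight(a),\terrainheight(b),\terrainheight(c)\}$ (and, conversely, any such value makes $v$ regular). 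Up to relabelling $a,b,c$, and up to the symmetry $\terrainheight\mapsto-\terrainheight$ — which reverses the filtration and hence acts bijectively on persistence diagrams — we may assume $\terrainheight(a)<\terrainheight(v)<\terrainheight(b)<\terrainheight(c)$; this is one of the two configurations of Figure~\ref{fig:triangular_proof}.

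Next I would note that the simplex-wise filtration of $\altterrain$ is contained in that of $\terrain$: outside the open triangle $abc$ the two terrains coincide, while for a point $p$ in the interior of $abc$ the cell $\sigma_p$ of $\terrain$ is one of $v$, $va$, $vb$, $vc$, $vab$, $vbc$, $vca$, each of whose vertex-height maximum is at most $\terrainheight(c)$, the height value of the single triangle $abc$ that contains $p$ in $\altterrain$; hence $s_p^{\altterrain}\ge s_p^{\terrain}$ for all $p$, so $S'_\alpha\subseteq S_\alpha$ for all $\alpha$. By the criterion recalled at the end of Section~\ref{sec-background} it then suffices to check that this inclusion is a homotopy equivalence for every $\alpha$, and since both filtrations are constant between the critical values $\terrainheight(a)<\terrainheight(v)<\terrainheight(b)<\terrainheight(c)$, it is enough to check it there. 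One checks this directly inside the triangle region: at $\terrainheight(a)$ both sublevel sets are just the vertex $a$; at $\terrainheight(v)$, $S_\alpha$ differs from $S'_\alpha$ by the whisker $av$ (the edge $va$ together with the vertex $v$), attached to the rest of $S_\alpha$ only at $a$, which collapses onto $a$; at $\terrainheight(b)$, $S_\alpha$ is the filled triangle $vab$ while $S'_\alpha$ is only its edge $ab$, and since that disk meets the rest of $S_\alpha$ only along $ab\subseteq S'_\alpha$ it deformation retracts onto $ab$ rel $ab$; and at and above $\terrainheight(c)$ both sublevel sets equal the whole closed triangle $abc$.

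Finally, since a deformation retraction induces an isomorphism on homology~\cite{hatcher} and, by functoriality, these isomorphisms commute with the inclusion-induced maps $S_\alpha\hookrightarrow S_\beta$, the persistence modules of $\terrain$ and $\altterrain$ are isomorphic, so their persistence diagrams coincide. The remaining configuration ($\terrainheight(v)$ the second largest height) then follows from the height-reversal symmetry, or equally well by running the same four-value check upside down.

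The step I expect to be the main obstacle is the local-to-global bookkeeping in the middle paragraph: one must verify that in each sublevel set the ``extra'' piece of $S_\alpha$ (a whisker, respectively a filled triangle) is glued to the remainder exactly along a subset already lying in $S'_\alpha$, so that a purely local collapse extends to a genuine deformation retraction of all of $S_\alpha$; this uses that $\partial(abc)$ is triangulated identically in $\terrain$ and $\altterrain$ and that only its lowest vertex (resp.\ lowest edge) is present at the relevant level. It is worth noting that this argument never uses that $a,b,c$ are interior vertices, so the proposition holds irrespective of their position relative to the domain boundary.
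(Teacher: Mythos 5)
Your proof is correct and takes essentially the same route as the paper's: the paper also reduces to the two height configurations ($\terrainheight(v)$ second smallest or second largest among $\{v,a,b,c\}$) and verifies, exactly as in Proposition~\ref{pro:flip}, that at each of the four critical values the inclusion of one simplex-wise sublevel set into the other is a deformation retraction (deferring the check to Figure~\ref{fig:triangular_proof}), while you spell that check out in words. The only loose phrase is the claim that $\terrainheight\mapsto-\terrainheight$ ``acts bijectively on persistence diagrams''---sublevel-set diagrams of $t$ and $-t$ are not related by a space-independent bijection---but since you also offer the direct four-value check for the second configuration, nothing essential is missing.
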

\begin{proof}
Note that because the link of $v$ is a triangle, no diagonals are necessary for the re-triangulation. Denoting the three adjacent vertices of $v$ as $a$, $b$, and $c$,
we can assume that $v,a,b,c$ have heights $\{1,2,3,4\}$ and since $v$ is regular, its height is either $2$ or $3$.
Figure~\ref{fig:triangular_proof} depicts both possible situations. In both cases, it can be observed, as in the proof of Proposition~\ref{pro:flip},
that the persistence diagrams are the same.
\end{proof}

For arbitrary vertex removals, we have generally several choices for triangulating the link. To characterize
which triangulations are persistence-preserving, we use the following concept.

\begin{definition}\label{def:pers-aware}
Let $v$ be a vertex of $\terrain$ with two adjacent vertices $a$, $b$ such that $\terrainheight(a)<\terrainheight(b)$.
We call the edge $ab$ \emph{persistence-aware} if one of the following conditions holds:
\begin{itemize}
  \item $\terrainheight(a)<\terrainheight(v)<\terrainheight(b)$;
  \item $\terrainheight(b)<\terrainheight(v)$ and there is a path on the link of $v$ from $a$ to $b$ with maximal height $\terrainheight(b)$;
  \item $\terrainheight(v)<\terrainheight(a)$ and there is a path on the link of $v$ from $a$ to $b$ with minimal height $\terrainheight(a)$.
\end{itemize}
\end{definition}

\begin{figure}[!htb]
    \centering
    \includegraphics[width=.23\textwidth]{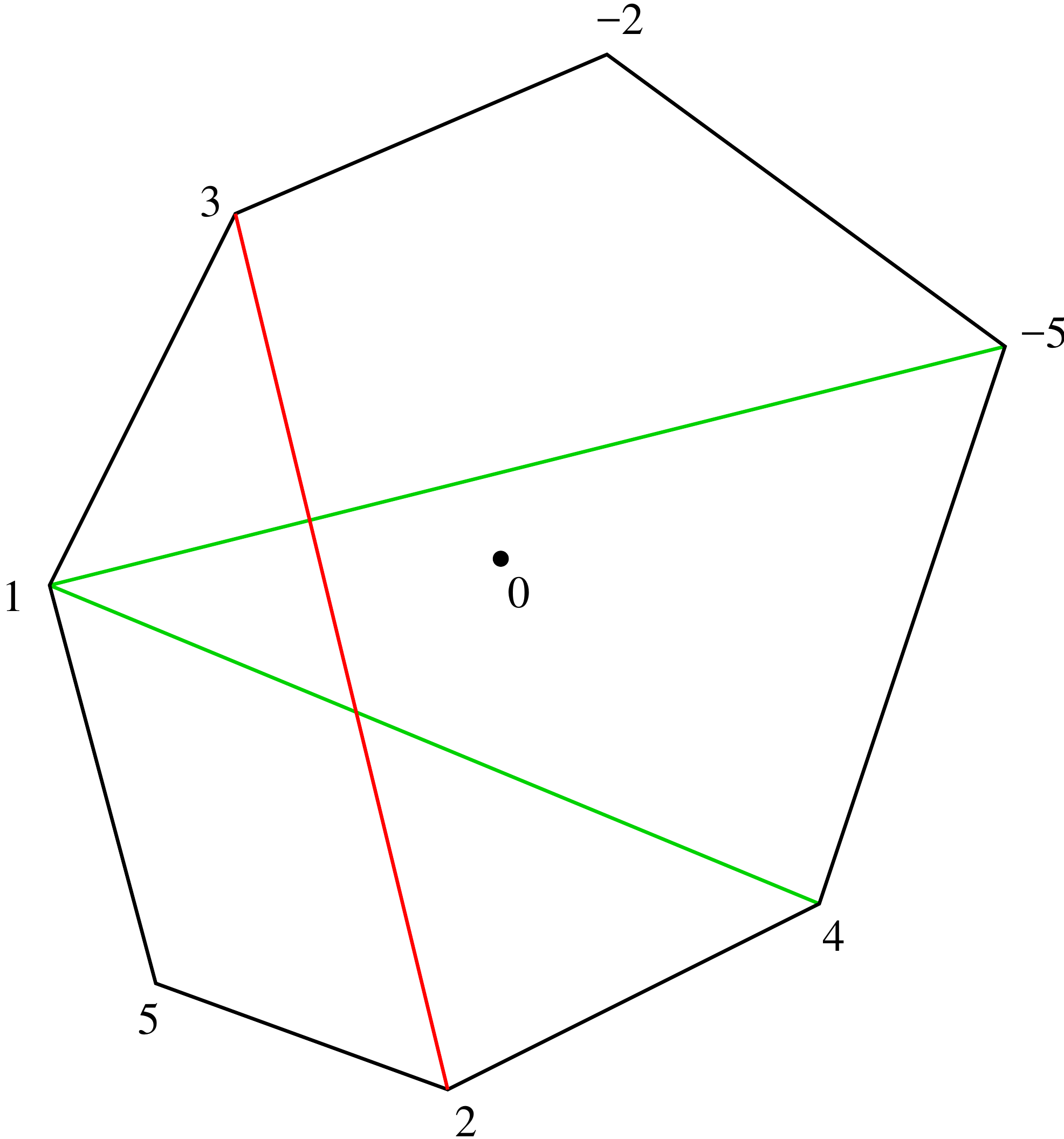}
    \caption{The link of a vertex with height $0$ is displayed. We identify the vertices in the link with their height. The edge $(-5)1$ is persistence-aware because
one height is positive and one negative (first case of Definition~\ref{def:pers-aware}). The edge $14$ is also persistence-aware because on the path $1\rightarrow5\rightarrow2\rightarrow4$, $1$ has minimal height
(second case of Definition~\ref{def:pers-aware}). The edge $23$ is not persistence-aware because $2$ is not minimal on either path from $2$ to $3$.}
\label{fig:pers-aware}
\end{figure}

An example is depicted in Figure \ref{fig:pers-aware}.

\begin{theorem}\label{thm:removal}
Let $v$ be a regular interior vertex of $\terrain$ of degree $d$. Let $\altterrain$ denote the terrain obtained by removing $v$ and re-triangulating its link
with $d-3$ diagonals. If all diagonals are persistence-aware, $\terrain$ and $\altterrain$ have the same persistence diagram.
\end{theorem}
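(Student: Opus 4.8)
The plan is to reduce the statement to Propositions~\ref{pro:flip} and~\ref{pro:triangle} by exhibiting $\altterrain$ as the endpoint of a chain
\[\terrain=\terrain_0\to\terrain_1\to\cdots\to\terrain_k=\altterrain\]
in which every step is either an edge flip whose two intervals $I_{xy},I_{x'y'}$ overlap — persistence-preserving by Proposition~\ref{pro:flip} — or the removal of a regular vertex of degree $3$ — persistence-preserving by Proposition~\ref{pro:triangle}. Chaining the resulting equalities of persistence diagrams then proves the theorem.

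The bridge between the hypothesis and Proposition~\ref{pro:flip} is the following observation, which I would establish first: if $v$ has degree $>3$ and one flips an edge $vu$ incident to $v$, the new edge joins the two neighbours $a,b$ of $u$ on the current link cycle and is a diagonal of the link polygon; and when that diagonal $ab$ is persistence-aware, the intervals $I_{vu}$ and $I_{ab}$ meet, so the flip preserves the persistence diagram. This is a short case check against Definition~\ref{def:pers-aware} that crucially uses that $v$ is regular, so that the vertices of $\link(v)$ below $v$ form one contiguous arc of the link cycle and those above $v$ the complementary arc. For instance, if $a$ is below $v$ and $b$ above $v$ then $\terrainheight(v)\in I_{vu}\cap I_{ab}$; if $a,b$ and $u$ are all below $v$, then a path on $\link(v)$ from $a$ to $b$ of maximal height $\max\{\terrainheight(a),\terrainheight(b)\}$ must stay inside the lower arc, and since $u$ lies in that arc this forces $\terrainheight(u)\le\max\{\terrainheight(a),\terrainheight(b)\}$, so again $I_{vu}\cap I_{ab}\neq\emptyset$; the case of $a,b$ above $v$ is symmetric (regularity also excludes the configuration in which $a,b$ lie below $v$ but $u$ above it).

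With this in hand I would run an induction on the degree $d$ of $v$; the base case $d=3$ is Proposition~\ref{pro:triangle}. For $d>3$, write $\mathcal{T}$ for the target re-triangulation and — when possible — pick a triangle $aub$ of $\mathcal{T}$ that is an ear of the link polygon (edges $au,ub$ on the link and $ab$ a diagonal), whose interior avoids $v$, and whose tip $u$ is neither the only vertex of $\link(v)$ below $v$ nor the only one above $v$. The diagonal $ab$ separates this ear, which contains $u$, from the part of the link polygon containing $v$; hence the segments $vu$ and $ab$ cross, the relevant quadrilateral is convex, and flipping $vu$ to $ab$ is a legal terrain operation — persistence-preserving by the previous paragraph, since $ab$ is a diagonal of $\mathcal{T}$ and therefore persistence-aware. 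In the resulting terrain $v$ has degree $d-1$, stays regular, and its link is re-triangulated by $\mathcal{T}$ with the ear removed and all remaining diagonals still persistence-aware (a path witnessing persistence-awareness that used the edges $au,ub$ is rerouted along the new edge $ab$), so the induction hypothesis applies. (We may assume $v$ lies in the interior of a single triangle of $\mathcal{T}$, as a triangulation of a polygon with at least four vertices has at least two ears.)

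The step I expect to be the main obstacle is exactly the words "when possible": it can happen that the only ear of $\mathcal{T}$ avoiding $v$ has its tip at the single vertex of $\link(v)$ below (or above) $v$, while every other ear of $\mathcal{T}$ contains $v$ in its interior. Peeling the available ear would turn $v$ into a local extremum and block the induction, so in this \emph{stuck} configuration one cannot work with ears of $\mathcal{T}$ directly. Instead one must first perform an auxiliary persistence-aware flip of an edge incident to $v$ — one exists by the two-ears property of the current link polygon together with the regularity of $v$, and it can be arranged to be topologically flippable and to keep $v$ regular — thereby reaching a different re-triangulation; and after $v$ is eventually removed, restore the target triangulation $\mathcal{T}$ by a final sequence of topologically flippable flips among the link edges. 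Proving that such a detour always terminates in $\mathcal{T}$ — equivalently, that the triangulations of $\link(v)$ all of whose diagonals are persistence-aware form a connected region of the flip graph under topologically flippable flips — is the genuine combinatorial content the proof has to supply.
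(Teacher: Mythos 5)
Your main induction coincides with the paper's: cut off an ear $aub$ of the target triangulation $\mathcal{T}$ whose tip is not the sole vertex of the lower or upper link (the paper calls these \emph{regular} ears and notes that at most one ear can fail this, since an irregular ear forces its tip to be the entire lower or upper link of the regular vertex $v$), flip the spoke $vu$ to the diagonal $ab$, deduce $I_{vu}\cap I_{ab}\neq\emptyset$ from persistence-awareness of $ab$ together with the regularity of $v$, and finish with Proposition~\ref{pro:triangle}. One local correction: your ``bridge'' observation is false in the generality in which you state it. Regularity of $v$ does \emph{not} exclude the configuration with $a,b$ below $v$ and $u$ above it: take $\terrainheight(v)=0$ and link heights $1,-2,-3,-1$ in cyclic order with $u$ the vertex of height $1$; then $v$ is regular, the diagonal $ab$ joining the vertices of heights $-2$ and $-1$ is persistence-aware via the path through the vertex of height $-3$, yet $I_{vu}=[0,1]$ and $I_{ab}=[-2,-1]$ are disjoint (and indeed the flip would make $v$ a maximum). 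What excludes this is not regularity alone but your later requirement that the ear tip is not the unique vertex of $\link^+(v)$ or $\link^-(v)$; the lemma must carry that hypothesis.

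The genuine gap is the one you flag yourself: the case in which every admissible ear of $\mathcal{T}$ contains $v$ in its interior, so the quadrilateral $aubv$ is non-convex and the straight-line flip cannot be performed. Your proposed detour --- do an auxiliary flip, remove $v$, then walk back to $\mathcal{T}$ in the flip graph --- hinges on the claim that the persistence-aware triangulations of the link form a connected region under topologically flippable flips, which you leave unproven; the argument is incomplete exactly there. The paper resolves this case by a different and lighter device: it temporarily passes to \emph{topological} triangulations with bent edges, observes that the simplex-wise filtration (hence the persistence diagram, Definition~\ref{def:pers-aware}, and Propositions~\ref{pro:flip} and~\ref{pro:triangle}) depends only on the heights and the combinatorial structure and not on the planar embedding, carries out the ear-cutting flips combinatorially, and then notes that the diagonals produced are precisely those of $\mathcal{T}$, which are pairwise non-crossing straight chords of the link polygon, so the final triangulation can be straightened. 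To complete your proof you should either import that embedding-independence argument or actually establish your flip-graph connectivity claim; as written, the latter is an open step, not a detail.
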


The proof idea is illustrated in Figure~\ref{fig:remove_thm_illu}.
We point out that the converse is also true, and the condition
is also equivalent to the statement that all vertices in the link
maintain their criticality status. See \ref{app:extended_proof}
for the proofs of these statements.

\begin{figure*}[!htb]
    \centering
    \includegraphics[width=.8\textwidth]{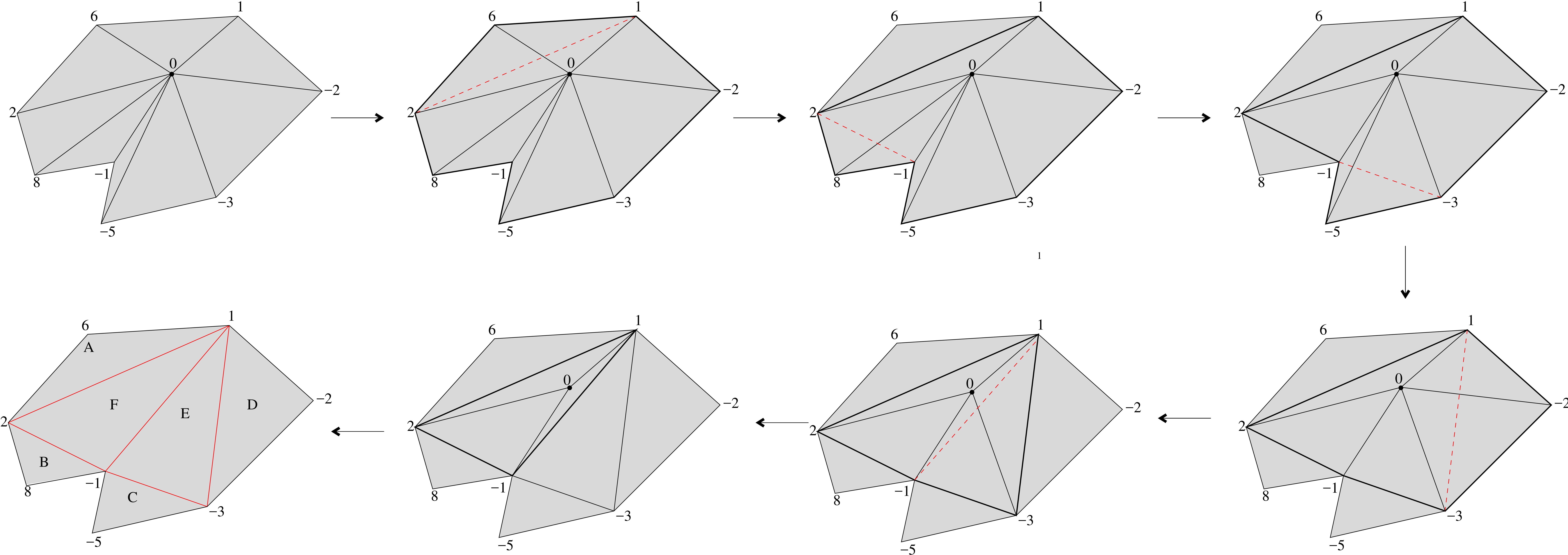}
    \caption{Illustration of the proof strategy of Theorem~\ref{thm:removal}. The upper left shows the link of a regular vertex to be removed. The lower left shows a triangulation of its link where all edges are persistence-aware. The remaining steps show the transformation used in the proof: in every step, the link of $v$ decreases by one vertex through a flipping of a topologically flippable edge.
After the last step, the link of $v$ is a triangle, which can be removed with Proposition~\ref{pro:triangle}.}
\label{fig:remove_thm_illu}
\end{figure*}

\begin{proof}
Without loss of generality, we assume that $\terrainheight(v)=0$.
We prove the statement by induction on $d$. Note that $d\geq 3$ because $v$ is not on the boundary. The case $d=3$ follows from Proposition~\ref{pro:triangle}.
For $d>3$, we write $\linktri$ for the triangulation of the link of $v$ using the $d-3$ persistence-aware diagonals. An \emph{ear} in a triangulation of a polygon is
a triangle $acb$ where both $ac$ and $bc$ are polygon edges, and only $ab$ is a diagonal. It is a well-known fact~\cite{de2000computational} that
every triangulation of a polygon with $d\geq 4$ vertices has at least two ears.

We call an ear $acb$ \emph{irregular} if $\terrainheight(a)$ and $\terrainheight(b)$ have the same sign and $\terrainheight(c)$ has the opposite sign,
and \emph{regular} otherwise. Note the the presence of an irregular ear implies that either the lower or the upper link of $v$ consists
of the vertex $c$ only (because $v$ is regular). Consequently, $\linktri$ can have at most one irregular ear, and hence, must have at least one
regular ear.

Now, let $acb$ denote a regular ear of $\linktri$. The idea is to flip the edge $cv$ of $\terrain$ to obtain the edge $ab$, which results in
$v$ losing one vertex in its link, and to use induction. Let us first make the simplifying assumption that $v$ is outside of the triangle $acb$,
hence $acbv$ forms a convex quadrilateral. We argue that $I_{ab}\cap I_{cv}\neq\emptyset$, which implies that flipping $cv$ does
not change the persistence diagram by Proposition~\ref{pro:flip}:
if the signs of $\terrainheight(a)$ and $\terrainheight(b)$ differ, this is clear because $\terrainheight(v)=0$.
If the signs of $\terrainheight(a)$ and $\terrainheight(b)$ are the same, the sign of $\terrainheight(c)$ is the same, because the ear is regular.
We assume that all signs are positive; the negative case is analogous. Because the edge $ab$ is persistence-aware, there is a path
on the link where all height values are at least $\min\{\terrainheight(a),\terrainheight(b)\}$. The path from $a$ to $b$ that does not pass
through $c$ passes through the lower link of $v$ and hence contains at least one vertex with negative height.
So, the only path that can have the property asserted by persistence-awareness is $a\rightarrow c\rightarrow b$. This implies $I_{ab}\cap I_{cv}\neq\emptyset$.

Let $\altterrain$ denote the terrain obtained from $\terrain$ by flipping $cv$. As we just argued, $\terrain$ and $\altterrain$
have the same persistence diagram. In $\altterrain$, $v$ has degree $d-1$. Moreover, the $d-4$ remaining diagonals (apart from $ab$)
triangulate the link of $v$ in $\altterrain$, and it can be readily checked that they are all persistence-aware, because
they were persistence-aware in $\terrain$. Hence, by induction, we can re-triangulate the link without changing the persistence diagram,
and the claim follows.

It remains to deal with the case that $v$ is inside the (regular) ear.
In this situation, the quadrilateral $acbv$ is not convex, and the edge flip
introducing $ab$ yields to crossings.
A first idea might be to just relocate $v$ inside
the polygon (the position of $v$ inside the polygon
has no effect on the persistence diagram).
It is possible, however, that the relocated $v$ might not see
all boundary vertices anymore (see Figure~\ref{fig:visibility_issue} (left)).

Instead, we can resolve this issue
by not insisting that the edges of a terrain are straight lines.
We omit a formal description of the argument for the sake of simplicity
and outline the proof idea.
A \emph{topological triangulation}
is a planar embedding of a graph where all bounded faces
are bounded by three edges. In a topological triangulation,
an edge flip can always be realized,
possibly by bending edge, as displayed in Figure~\ref{fig:visibility_issue} (right).
We can also define the simplex-wise filtration of
topological triangulations because the filtration only depends on the
elevation of the vertices and the combinatorial structure of the triangulation;
in particular, the filtration, and the resulting persistence diagram
are independent of the actual planar embedding.
Also, the definition of links and persistence-aware diagonals,
and all results from this section extend to the topological case.
In that way, we obtain a topological triangulation with $d-3$ diagonals
and the same persistence diagram as the initial one. But, since we
obtain the diagonals by cutting off ears, a simple inductive argument
shows that these diagonals do not cross when embedded
as straight-line segments. This allows us to ``straighten'' the final triangulation and finishes the proof.
\end{proof}

\begin{figure}[!htb]
    \centering
    \includegraphics[width=.2\textwidth]{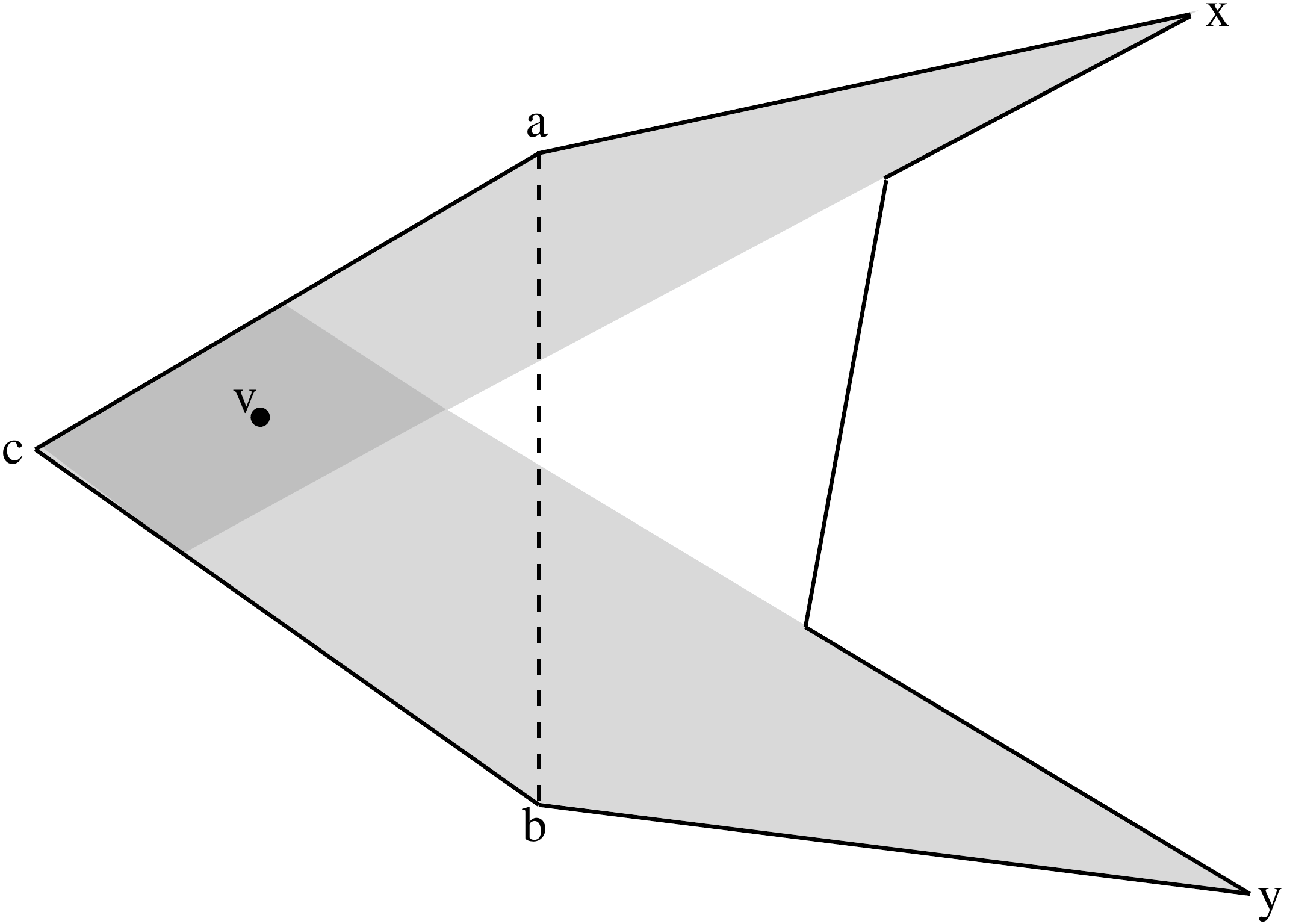}
    \includegraphics[width=.2\textwidth]{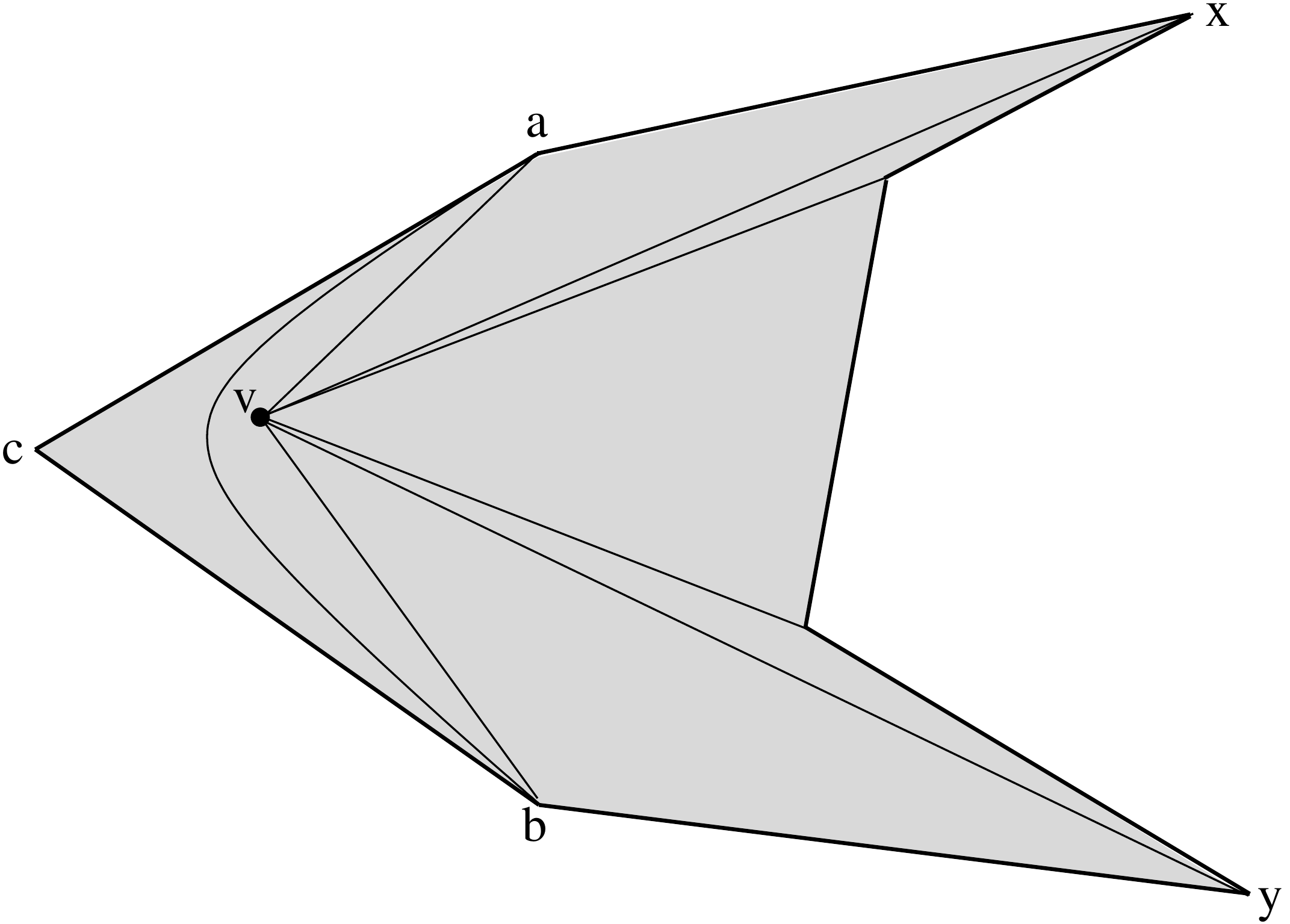}
    \caption{Left: the vertex $v$ cannot be moved out of the triangle $acb$ such that it sees both $x$ and $y$,
as illustrated by the shaded regions. Right: we can flip $cv$ to $ab$ by
bending $ab$.}
\label{fig:visibility_issue}
\end{figure}

\section{Algorithm}
\label{sec-algorithm}
\paragraph{Reduction method}
We give an algorithm for the following problem: given $\eps>0$, a \emph{base terrain} $\baseterrain$ and a terrain $\terrain$ with $\|\baseterrain-\terrain\|_\infty\leq\eps$, compute a terrain $\simpterrain$ with fewer vertices than $\terrain$, such that
$\|\baseterrain-\simpterrain\|_\infty\leq\eps$ and $\terrain$ and $\simpterrain$ have the same persistence diagram.
Note that $\baseterrain=\terrain$ is allowed, in which case one obtains
a $L_\infty$-close approximation of $\terrain$ with the same topological
information. In our application, however, $\baseterrain$ will be the input
terrain, and $\terrain$ will be the result of the BLW algorithm,
arising from a subdivision of $\baseterrain$.

The algorithm idea is to start with $\simpterrain\gets\terrain$
and incrementally removing vertices from $\simpterrain$,
maintaining the condition on the $L_\infty$-distance and the persistence
diagram as specified above.
For that, we proceed as follows: we maintain a set of candidate vertices
(initially set to all vertices of $\terrain$). While this set is non-empty,
we choose a candidate vertex $v$ uniformly at random,
remove it from the candidate set
and check whether it can be removed and its link
can be triangulated maintaining the condition.
If yes, we remove $v$ from $\simpterrain$ and
add the edges of the link triangulation. We re-insert the vertices of the link
to the candidate set, if not already contained, and start over.
This finishes the description of the algorithm.

\paragraph{Testing edges and triangles}
The major primitive of the above algorithm is to find a triangulation of the
link of a vertex that maintains the algorithm's condition (or to output that
there is none). We call a vertex \emph{removable} if such a triangulation
exists. To characterize removable vertices, we need the following definitions.
Let $\simpterrainheight$ denote the height function of $\simpterrain$
and $\baseterrainheight$ the height function of $\baseterrain$.
For two vertices $u$, $v$ of $\simpterrain$ (not necessarily connected
via an edge), we call the line segment $uv$
\emph{$L_\infty$-aware} if for any point $x=\lambda u+(1-\lambda)v$
on the line segment $uv$, we have that $|(\lambda\simpterrainheight(u)+(1-\lambda)\simpterrainheight(v)) - \baseterrainheight(x)|\leq\eps$.
In other words, if $uv$ was an edge of the terrain $\simpterrain$,
the $L_\infty$-distance of $\simpterrain$ and $\baseterrain$
is at most $\eps$ along the edge. Similarly, for three vertices $u$, $v$, $w$
of $\simpterrain$, we call the triangle $uvw$ \emph{$L_\infty$-aware}
if for any point $x=\lambda u+\mu v+(1-\lambda-\mu)w$ of the triangle,
we have that $|(\lambda\simpterrainheight(u)+\mu\simpterrainheight(v)+(1-\lambda-\mu)\simpterrainheight(w)) - \baseterrainheight(x)|\leq\eps$.

\begin{lemma}\label{lem:removable}
A vertex $v$ is removable if there is a triangulation of its link such that
all diagonals of the triangulation are persistence-aware and $L_\infty$-aware,
and all triangles of the triangulation are $L_\infty$-aware.
\end{lemma}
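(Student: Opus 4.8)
The plan is to take the triangulation provided by the hypothesis and show that it already witnesses the removability of $v$. Let $\simpterrain'$ denote the terrain obtained from the current terrain $\simpterrain$ by deleting $v$ together with its $d$ incident edges and triangles and inserting the given triangulation $\linktri$ of $\link(v)$ (its $d-3$ diagonals and the $d-2$ resulting triangles). To establish removability I would verify the two parts of the loop invariant maintained by the algorithm, now for $\simpterrain'$: that $\simpterrain'$ has the same persistence diagram as $\simpterrain$ (hence as $\terrain$), and that $\|\baseterrain-\simpterrain'\|_\infty\leq\eps$. Throughout I use that $v$ is a regular interior vertex; this is checked before this test is applied and is in any case necessary, since a non-regular interior vertex cannot preserve the persistence diagram (Section~\ref{sec-theory}) and a boundary vertex admits no vertex-removal operator.

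The persistence part is immediate. Applying Theorem~\ref{thm:removal} with $\simpterrain$ in the role of the ambient terrain and $\simpterrain'$ in the role of the terrain after removal: $v$ is a regular interior vertex of $\simpterrain$, and by hypothesis each of the $d-3$ diagonals of $\linktri$ is persistence-aware, so the theorem gives that $\simpterrain$ and $\simpterrain'$ have the same persistence diagram.

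For the $L_\infty$ bound I would argue region by region. Let $P$ be the closed star of $v$ in $\simpterrain$, that is, the polygon bounded by the cycle $\link(v)$ with $v$ in its interior. Outside $\mathrm{int}(P)$ the terrains $\simpterrain$ and $\simpterrain'$ coincide as PL height functions --- they share every simplex not incident to $v$, and on $\link(v)$ both are the linear interpolation of the link heights --- so there $|\simpterrainheight'(x)-\baseterrainheight(x)|=|\simpterrainheight(x)-\baseterrainheight(x)|\leq\eps$, using that $\|\baseterrain-\simpterrain\|_\infty\leq\eps$ holds before the removal. On $P$, the $d-2$ triangles of $\linktri$ tile $P$ exactly, and since removal leaves the heights of the surviving vertices unchanged, the PL height function of $\simpterrain'$ on a triangle $\tau\in\linktri$ with vertices $u_1,u_2,u_3$, evaluated at $x=\lambda_1u_1+\lambda_2u_2+\lambda_3u_3$ with $\lambda_i\geq 0$ and $\lambda_1+\lambda_2+\lambda_3=1$, equals $\lambda_1\simpterrainheight(u_1)+\lambda_2\simpterrainheight(u_2)+\lambda_3\simpterrainheight(u_3)$ --- exactly the quantity bounded in the definition of an $L_\infty$-aware triangle. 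Hence $L_\infty$-awareness of all triangles of $\linktri$ gives $|\simpterrainheight'(x)-\baseterrainheight(x)|\leq\eps$ on every such triangle, hence on all of $P$, hence on the whole domain. Combining the two regions, $\|\baseterrain-\simpterrain'\|_\infty\leq\eps$, so $v$ is removable.

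I do not expect a genuine obstacle: the lemma packages Theorem~\ref{thm:removal} with a routine $L_\infty$ bookkeeping argument. The two points worth a sentence of care are (i) that the triangles of $\linktri$ cover precisely the former star $P$, so that the pointwise comparison against $\baseterrain$ happens over the same region for $\simpterrain$ and $\simpterrain'$; and (ii) that the $L_\infty$-awareness demanded of the diagonals is in fact redundant for the conclusion, since every diagonal lies on the boundary of an $L_\infty$-aware triangle and therefore inherits the bound --- it is included in the statement because the dynamic-programming search that finds such a triangulation assembles it diagonal by diagonal and needs that partial test. Finally, although stated as a sufficient condition (which is all the algorithm needs), the criterion is in fact an equivalence: the converse follows by combining the converse of Theorem~\ref{thm:removal} (proved in \ref{app:extended_proof}) with the trivial observation that $\|\baseterrain-\simpterrain'\|_\infty\leq\eps$ forces each triangle of $\linktri$ to be $L_\infty$-aware.
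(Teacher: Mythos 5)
Your proof is correct and follows essentially the same route as the paper's: the persistence diagram is preserved by Theorem~\ref{thm:removal} since all diagonals are persistence-aware, and the $L_\infty$ bound is obtained by noting that the new triangles tile the former star of $v$ (where $L_\infty$-awareness gives the bound) while the terrain is unchanged elsewhere. Your added observations --- that regularity of $v$ must be assumed for Theorem~\ref{thm:removal} to apply, that the $L_\infty$-awareness of the diagonals is implied by that of the triangles, and that the condition is in fact also necessary --- are accurate refinements the paper leaves implicit.
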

\begin{proof}
Let $\simpterrain_1$ denote the terrain before the removal of $v$,
and $\simpterrain_2$ the terrain when removing $v$ and inserting the triangulation. Since both diagonals and triangles of the link triangulation are
$L_\infty$-aware, $\simpterrain_2$ is $\eps$-close to $\baseterrain$
in the area enclosed by the link, and hence on the entire domain,
because it coincides with $\simpterrain_1$
everywhere else. The persistence diagrams of $\simpterrain_1$
and $\simpterrain_2$ are equal by Theorem~\ref{thm:removal}
because all diagonals are persistence-aware.
\end{proof}

Note that checking whether a diagonal is persistence-aware can be done
by simply traversing the link once, yielding a $O(d)$ primitive
with $d$ the size of the link. The $L_\infty$-awareness tests are based
on the following results, which reduces the locations where the height
difference is maximized to a finite set.

\begin{lemma}
\label{lem:overlay}
The $L_\infty$-distance of $\simpterrain$ to $\baseterrain$ along an edge $uv$
is maximized at $u$, at $v$, at a crossing of $uv$ with an edge of $\baseterrain$, or at an intersection of $uv$ with a vertex of $\baseterrain$.

The $L_\infty$-distance of $\simpterrain$ to $\baseterrain$ in a triangle $uvw$
is maximized at a boundary edge or vertex of $uvw$,
or at a vertex of $\baseterrain$ inside the triangle.
\end{lemma}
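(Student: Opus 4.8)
The plan is to prove each of the two claims in Lemma~\ref{lem:overlay} by analyzing the structure of the function $f(x) := g(x) - \baseterrainheight(x)$, where $g$ is the affine (edge-case) or affine-on-the-triangle (triangle-case) function interpolating the height values of $\simpterrain$, and $\baseterrainheight$ is the piecewise-linear extension of the base terrain's height function. The key observation is that $g$ is linear, and $\baseterrainheight$ is piecewise-linear with respect to the triangulation of $\baseterrain$; hence $f$ is piecewise-linear, and its domain of linearity pieces is given by the overlay (common refinement) of the relevant cell of $\simpterrain$ with the triangulation of $\baseterrain$. Since $|f|$ is maximized over a compact polygon by a piecewise-linear function exactly at a vertex of that overlay (a convex, piecewise-linear function on a polytope attains its max at a vertex, and $|f| = \max(f, -f)$ is piecewise-linear and attains its max where one of $f$ or $-f$ does, which in turn is at a vertex of the common refinement), it suffices to enumerate the vertices of the overlay.

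For the edge case, I would restrict attention to the one-dimensional segment $uv$. The overlay of $uv$ with the triangulation of $\baseterrain$ subdivides $uv$ at finitely many points: the two endpoints $u$ and $v$; the points where $uv$ crosses an edge of $\baseterrain$ transversally; and the points where $uv$ passes through a vertex of $\baseterrain$. Between consecutive such points, both $g|_{uv}$ and $\baseterrainheight|_{uv}$ are affine, so $f|_{uv}$ is affine and $|f|_{uv}|$ attains its maximum on that subsegment at one of its two endpoints. Taking the maximum over all subsegments gives exactly the four types of locations in the statement. One minor wrinkle: $uv$ might lie along an edge of $\baseterrain$ or pass through infinitely many boundary points of triangulation cells; but in that degenerate situation $f|_{uv}$ is still affine on the overlapping portion, so the maximum is still at one of the listed points (the endpoints of such a portion being crossings or vertices), and I would note this does not affect the conclusion.

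For the triangle case, the argument is the two-dimensional analogue. The affine function $g$ on $uvw$ minus the piecewise-linear $\baseterrainheight$ gives a piecewise-linear $f$ whose linearity regions are the cells of the overlay of the triangle $uvw$ with the triangulation of $\baseterrain$. The vertices of this overlay are of three kinds: the three vertices $u,v,w$ of the triangle; intersection points of the boundary $\partial(uvw)$ with edges or vertices of $\baseterrain$ (which lie on a boundary edge of $uvw$); and vertices of $\baseterrain$ lying strictly inside $uvw$ (crossings of two base-terrain edges inside $uvw$ are always vertices of $\baseterrain$, so these are covered). Since $|f|$ attains its maximum over the compact polygon $uvw$ at a vertex of the overlay, the maximizer is either a vertex of $uvw$, a point on a boundary edge of $uvw$, or an interior vertex of $\baseterrain$, which matches the statement. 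I would absorb the first two bullet-types into the phrasing ``at a boundary edge or vertex of $uvw$'' as written.

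The main obstacle is making precise the claim that a piecewise-linear function on a polygon attains the maximum of its absolute value at a vertex of its subdivision into linearity regions. This is intuitively clear --- on each closed linearity cell $f$ is affine, hence $|f|=\max(f,-f)$ is convex and attains its max at a vertex of that cell --- but it requires being careful that the \emph{cells} of the overlay are convex polygons (true, since they are intersections of triangles of $\baseterrain$ with the triangle $uvw$, or subsegments of $uv$) and that their vertices are among the points enumerated. The enumeration of overlay vertices is the only place where a small case check is needed (endpoints; boundary-edge/boundary-edge crossings; boundary-edge through a base vertex; interior base vertices), and I would dispatch it by the standard fact that vertices of a common refinement of two planar subdivisions are the vertices of either subdivision plus the transversal crossings of their edges, then observing that a transversal crossing of two edges of $\baseterrain$ is by definition a vertex of $\baseterrain$. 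Everything else is routine.
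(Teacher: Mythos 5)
Your proposal is correct and rests on the same observation as the paper's proof: the height difference is piecewise linear over the common refinement of the cell of $\simpterrain$ with the triangulation of $\baseterrain$, so its absolute value is maximized at a vertex of that refinement, and these vertices are exactly the listed locations. The paper packages this as a short local argument by contradiction (at any other point the difference is locally linear, hence either has no local maximum or is constant) rather than explicitly constructing the overlay, but the substance is identical.
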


\begin{figure}[!htb]
    \centering
    \includegraphics[width=.23\textwidth]{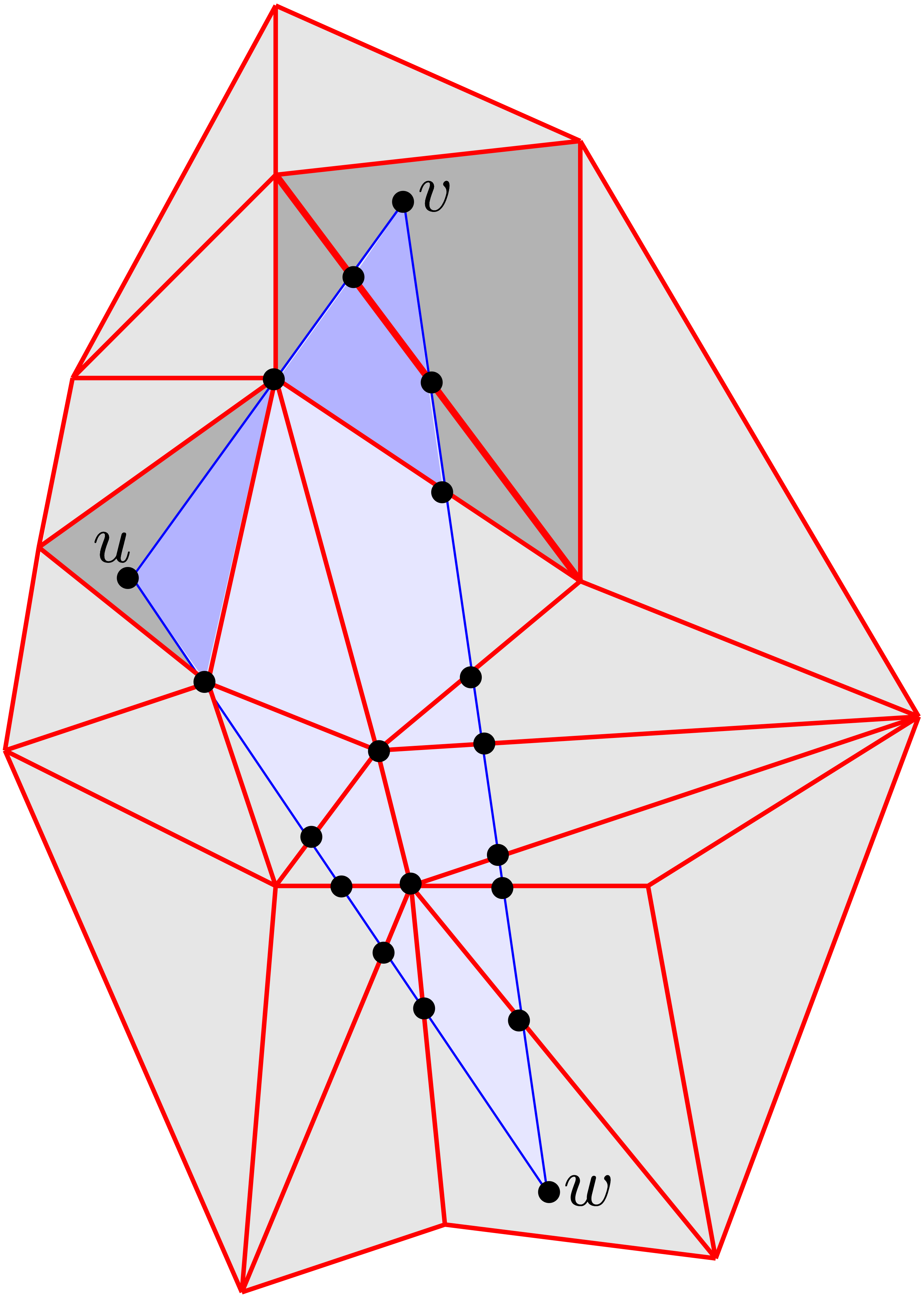}
    \caption{The maximal difference between the terrain $\baseterrain$ (in red)
and the triangle $uvw$ of $\simpterrain$ (in blue) is attained at one
of the marked points (in black). For each boundary edge, the maximum
is attained at one of the marked points along this edge. The zone
of the edge $uv$ is also marked (darkly shaded).}
\label{fig:overlay}
\end{figure}

We refer to Figure~\ref{fig:overlay} for an illustration
and to \ref{app:algo_details} for the (simple) proof.

Hence, for checking $L_\infty$-awareness of edges, it suffices
to compute the \emph{zone} of $uv$ in $\baseterrain$ (i.e., the set of vertices, edges,
and faces of $\baseterrain$ traversed by $uv$) and check the height difference
at most one point per element in the zone.
The running time is dominated by computing the zone of $uv$.

For checking the $L_\infty$-awareness of triangles, it suffices
to compute the $L_\infty$-awareness of its boundary edges and to additionally
check the height difference for all vertices of $\baseterrain$
within $uvw$. We obtain these vertices by a triangular range query
in $\baseterrain$, and the running time is dominated by the complexity
of this range query - see \ref{app:algo_details}.

\paragraph{Finding a triangulation}
The above primitives to check persistence- and $L_\infty$-awareness
allow us to investigate a fixed triangulation. However, the naive application
of Lemma~\ref{lem:removable} by just trying all triangulations is prohibitive
because of the exponential number of possible triangulations of the link.
We describe a simple dynamic programming approach.

Let $P$ denote the link polygon and let its vertices be denoted by $1,\ldots,d$
in counterclockwise order. Let $1\leq i<i+1<j\leq d$, and define $P[i,j]$
as the subpolygon spanned by vertices $i,i+1,\ldots,j$.
The idea is that any triangulation of $P[i,j]$ has one triangle $ikj$
incident to the edge $ij$, where $i<k<j$. To determine whether $P[i,j]$
has a valid triangulation with triangle $ikj$, it suffices to check
whether the edges $ik$ and $kj$ are persistence- and $L_\infty$-aware,
whether the triangle $ikj$ is $L_\infty$-aware, and whether the
subpolygons $P[i,k]$ and $P[k,j]$ admit a valid triangulation
(if $k=i+1$ or $j=k+1$, the corresponding test can be skipped). We simply
iterate through all $k=i+1,\ldots,j-1$ until a valid triangulation
is found or all values of $k$ have been tested. Applying this
procedure to $P=P[1,d]$ yields an algorithm to check whether a vertex
is removable (and computes a valid triangulation in the positive case).

The above procedure operates on subpolygons of the form $P[i,j]$ with
$1\leq i< j\leq d$ and we can thus store all intermediate answers
using quadratic space. We also store the persistence- and
the $L_\infty$-awareness of every edge when computed, using quadratic
space as well. Hence, the procedure requires a total of $O(d^2)$
such awareness tests on edges, and $O(d^3)$ $L_\infty$-awareness tests
on triangles in the worst case.\footnote{There is no reason to store
the results on triangles since every triangle is queried only once.}

\paragraph{Mesh improvement}
With similar ideas, we can also describe a greedy procedure to improve
the quality of the triangulation of a terrain, without
sacrificing its geometric and topological properties.
The input are two terrains $\baseterrain$
and $\terrain$, and the output is a terrain $\simpterrain$, with the same
relations as in the reduction phase, with the difference that
$\simpterrain$ has the same number of vertices as $\terrain$.

The idea is reminiscent of the well-known edge-flipping algorithm
for turning an arbitrary triangulation into a Delaunay triangulation
just by greedily flipping edges to increase the minimal angle locally~%
\cite{de2000computational}: we flip the edge $e$ to the edge $e'$
if $e'$ is persistence- and $L_\infty$-aware, the two incident triangles
to $e'$ are $L_\infty$-aware, and the flip improves the minimal angle.
We maintain a set of candidate edges (initially all edges of $\terrain$)
and flip a candidate if the described criteria are met. We repeat until
no candidate remains.

\paragraph{The BLW algorithm}
We use the topological simplification algorithm from~\cite{blw-optimal}
as described in detail in the PhD thesis of Bauer~\cite{bauer-thesis}.
We also use the symmetrizing approach~\cite[Section 6.2]{blw-optimal}
to reduce the $L_\infty$-perturbation per cell.

The output of the BLW algorithm is a simplex-wise function on the
input triangulation $\baseterrain$,
that is, we obtain a filtration value for each vertex,
edge, and face of the triangulation. In order to convert this into a terrain,
we pass to the barycentric subdivision $\terrain$ of $\baseterrain$,
and assign to each
vertex of $\terrain$ the filtration value of $\baseterrain$ as its height.
It is known
that the terrain obtained in this way has the same persistence diagram
as the initial function~\cite[Thm 7]{blw-optimal}. However, in order to
satisfy the $L_\infty$-constraint, we might have to place the vertex
very close to one of the vertices (see Figure~\ref{fig:barycenter_placement}),
which yields a poor quality mesh
and numerical instability (if an imprecise number type is used).
In general, such a placement might even be impossible
(see also~\cite[Section 4.3]{aghlm-persistence}),
but our solution works under the assumption that $\eps$ is chosen generically,
meaning that $2\eps$ is not equal to the height difference of any two input vertices.
We refer to \ref{app:algo_details} for details of our placement method.

\begin{figure}[!htb]
    \centering
    \includegraphics[width=.47\textwidth]{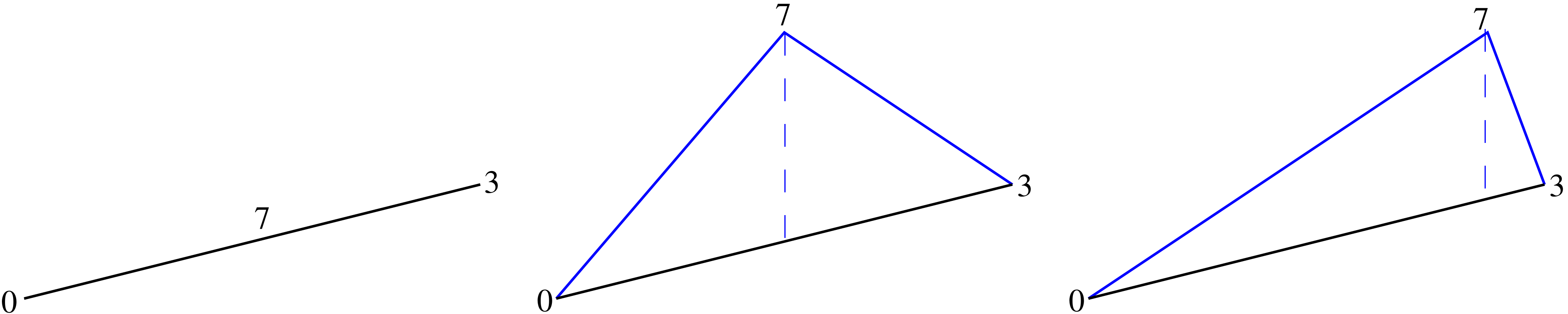}
    \caption{Illustration for the problem of barycentric subdivision. In this figure, the $y$-coordinate shows the height. Left: assume the BLW algorithm has returned the given height values for an edge and its
two endpoints (with $\eps=5$). Middle: splitting the edge in the middle
leads to a height difference of $6.5$. Right: placing the split point
sufficiently close to the right endpoint satisfies the $L_\infty$-constraint.}
\label{fig:barycenter_placement}
\end{figure}

\paragraph{Avoiding subdivisions}
Since our final goal is to remove as many vertices as possible, it is
beneficial to avoid subdividing edges and triangles if possible. We devise
a simple heuristic for that: if the BLW algorithm assigns to an edge the same
value as to one of its boundary vertices, we do not subdivide this edge.
Moreover, if the BLW algorithm assigns to a triangle the same value
as to one of its corner vertices and none of the three boundary edges
is subdivided, the triangle is not subdivided.
This procedure can create a situation that a triangle is subdivided,
but only some (or none) of its boundary edges are. In that case,
the vertex of the face is connected with the corner vertices
and the vertices of subdivided boundary edges
(see Figure~\ref{fig:sparse_subdivision}).
The correctness of the method comes from the observation that for each step
the filtrations of the fully subdivided and partially subdivided terrain
coincide.

\begin{figure}[!htb]
    \centering
    \includegraphics[width=.3\textwidth]{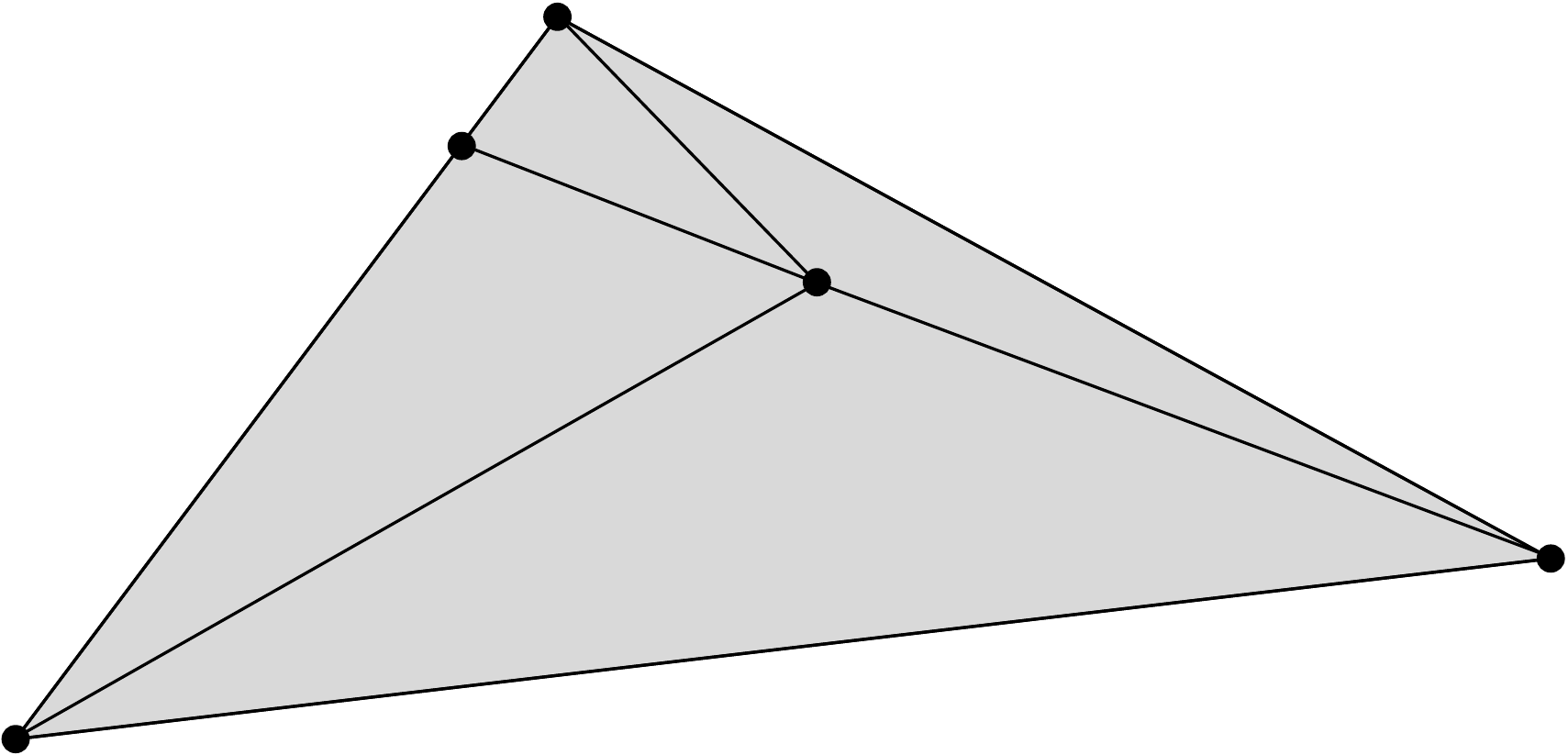}
    \caption{An example where the subdivision point of a triangle
is only connected to one subdivided edge, and the three boundary vertices.}
\label{fig:sparse_subdivision}
\end{figure}

\paragraph{The whole pipeline}
We summarize the entire algorithmic pipeline that we use. The input
is a terrain $\baseterrain$
and some $\eps>0$, such that $2\eps$ is not equal to the difference
of two heights. The output is another terrain $\simpterrain$
that is $\eps$-close to $\baseterrain$ in $L_\infty$-distance
and has the smallest number of critical points possible.

\begin{itemize}
\item Apply the BLW algorithm on $(\baseterrain,\eps)$ to get a
terrain $\terrain$ with the specified properties.
\item Apply the reduction method on $(\baseterrain,\terrain,\eps)$ to
get a smaller terrain $\simpterrain$ with the same properties.
\item Apply the mesh improvement algorithm on $(\baseterrain,\simpterrain,\eps)$
and return the output.
\end{itemize}

\section{Implementation and experiments}
\label{sec-experiments}

\paragraph{Implementation details}
We implemented the algorithm described in Section~\ref{sec-algorithm}
using C++. The program is available in a public repository\footnote{https://bitbucket.org/mkerber/terrain\_simplification.} and it consists of about 3000 lines of code.

Our implementation makes extensive use of the functionality provided
by the \textsc{Cgal} library (version 4.14)\footnote{CGAL, Computational Geometry Algorithms Library, https://www.cgal.org.}.
For representing terrains, we decided to use \textsc{Cgal}'s \emph{2D arrangement package}~\cite{cgal-arrangements}
instead of a triangulation data structure, because it provides more flexibility in the removal and re-triangulating
procedure. Moreover, it contains the basic functionality to traverse the zone of a line segment
in an arrangement, needed for checking the $L_\infty$-constraint for edges.
For triangular range queries,
we wrote our own algorithm~-- see \ref{app:algo_details}.

We point out that our geometric primitives cannot assume generic position of the vertices of the terrain.
Even if this was assumed for the input $\baseterrain$, subdividing an edge yields a triple of collinear
points by design. Moreover, as mentioned above, the subdivision approach might place vertices
very close to existing ones. To avoid instabilities due to these effects, our implementation uses
exact number types both for coordinates of points and height values
(provided by \textsc{Cgal}'s \texttt{Exact\_predicates\_exact\_construction\_kernel}~\cite{cgal-kernel}).

\paragraph{Examples}
We demonstrate the various steps of the proposed implementation with some examples.
We have considered datasets coming from two different sources: \cite{Steiermark} provides an altitude value for each point of a regular grid representing Styria, a state of Austria; \cite{Aim@shape} contains a collection of terrain models representing the area around some lakes in Italy (see Figure \ref{fig:example_input}).

\begin{figure}[!htb]
	\centering
  \resizebox{0.49\textwidth}{!}{
	\begin{tabular}{cc}
		\includegraphics[width=.6\linewidth]{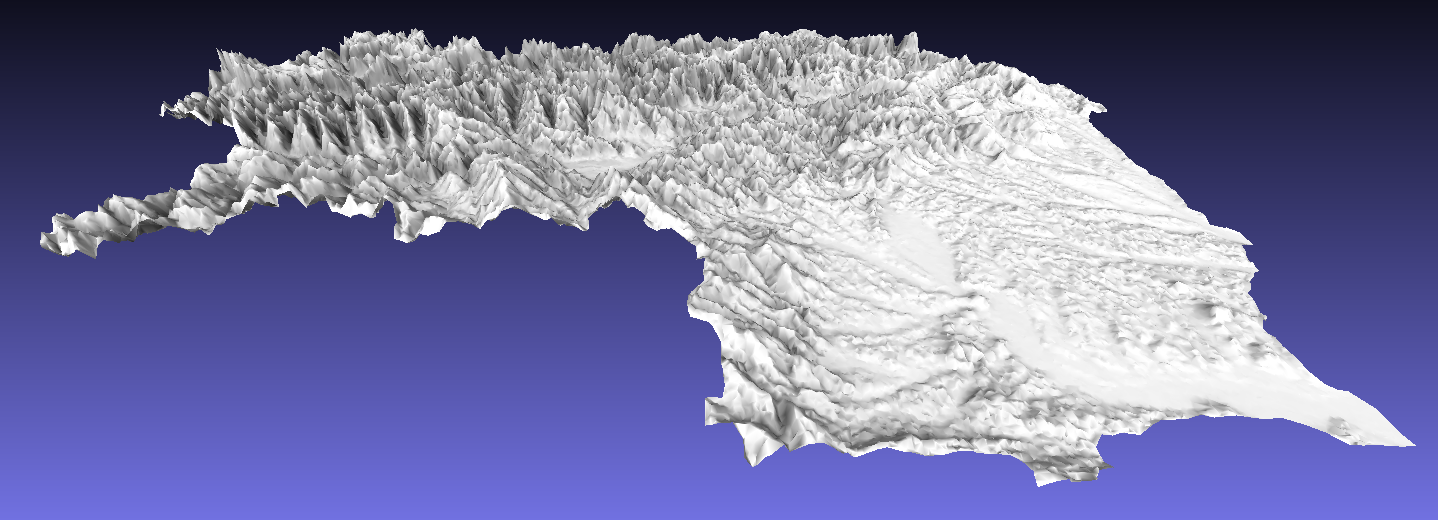} &
		\includegraphics[width=.318\linewidth]{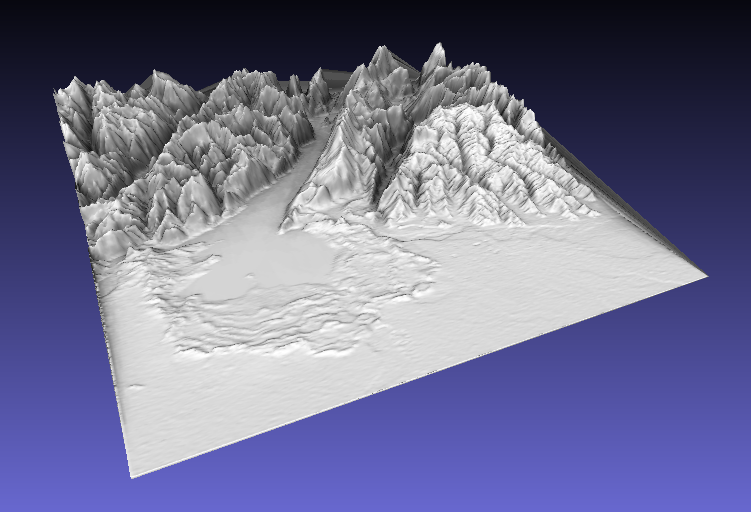} \\
		(a) & (b)\\
	\end{tabular}
  }
	\caption{Terrain representations of Styria (a) from \cite{Steiermark}, and of lake Garda (b) from \cite{Aim@shape}.}
  \label{fig:example_input}
\end{figure}

For both the classes of terrains, we have studied datasets of various sizes by considering all the provided points or a subsample of them.
We focus here on a representation of Styria of $100$K vertices (mentioned in the introduction), performing on it the proposed reduction strategy by choosing as $L_\infty$-constraint $\epsilon=100$ meters.
Despite the drastic reduction in size to $11$K vertices, the input and the output terrains look pretty similar (see Figure \ref{fig:example_IO-comparison}).

\begin{figure}[!htb]
	\centering
  \resizebox{0.49\textwidth}{!}{
	\begin{tabular}{cc}
		\includegraphics[width=.46\linewidth]{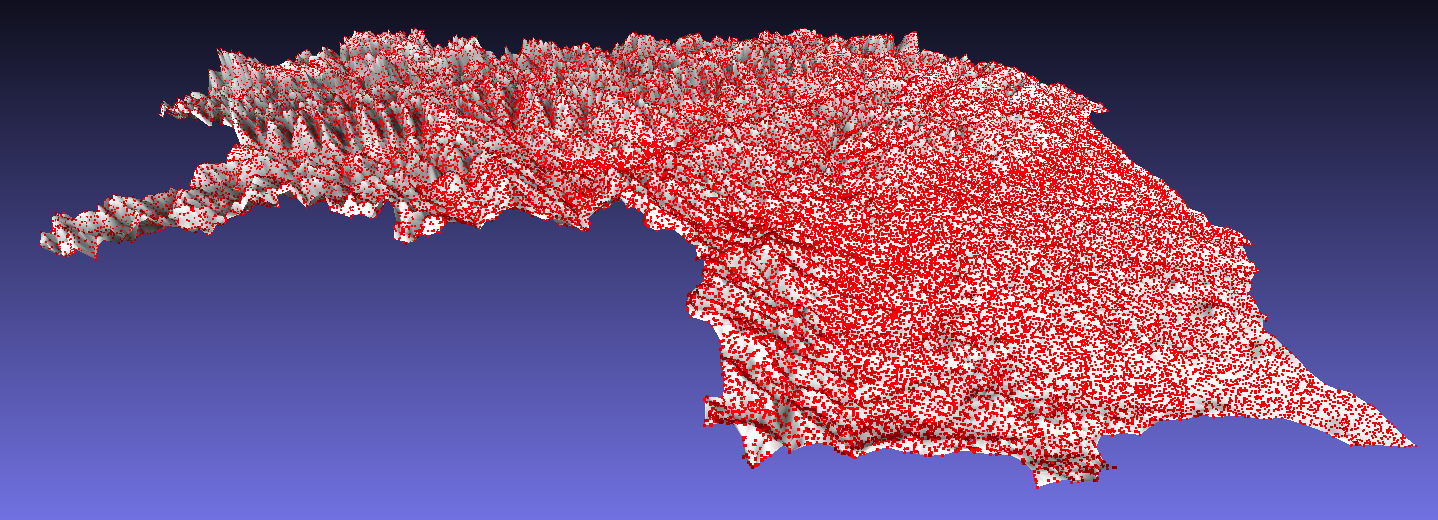} &
		\includegraphics[width=.46\linewidth]{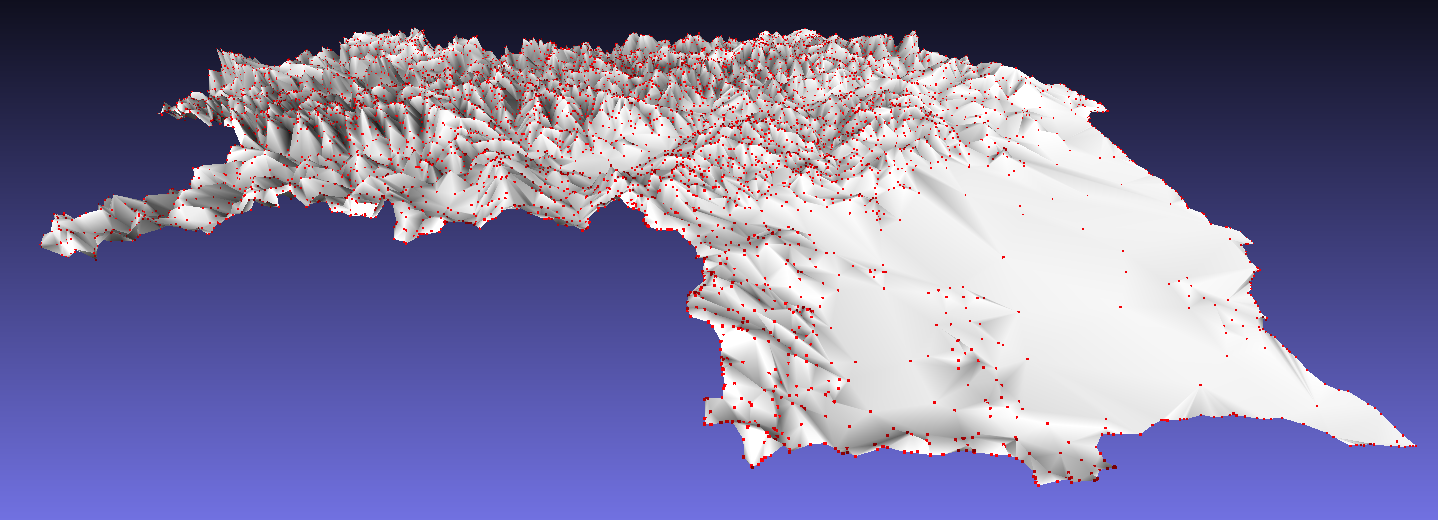} \\
		(a) & (b)\\
	\end{tabular}
  }
	\caption{The input terrain (a) and the output obtained by applying the proposed algorithm choosing $\epsilon=100$ meters (b). For both terrains, vertices are depicted in red.}
  \label{fig:example_IO-comparison}
\end{figure}

The first step of the proposed simplification strategy consists in performing the BLW algorithm. The topological simplification performed by BLW is based on a barycentric subdivision of the input terrain. As discussed in the previous section, this step can be improved by avoiding a certain number of subdivisions. Figure \ref{fig:example_Subdivisions} depicts cutouts of the input terrain, of the fully subdivided, and of the partially subdivided terrain. We observed that triangles involved in subdivisions in Figure \ref{fig:example_Subdivisions}(c) tend to form paths in the mesh surface rather than being randomly spread.

\begin{figure}[!htb]
	\centering
  \resizebox{0.49\textwidth}{!}{
	\begin{tabular}{ccc}
		\includegraphics[width=.29\linewidth]{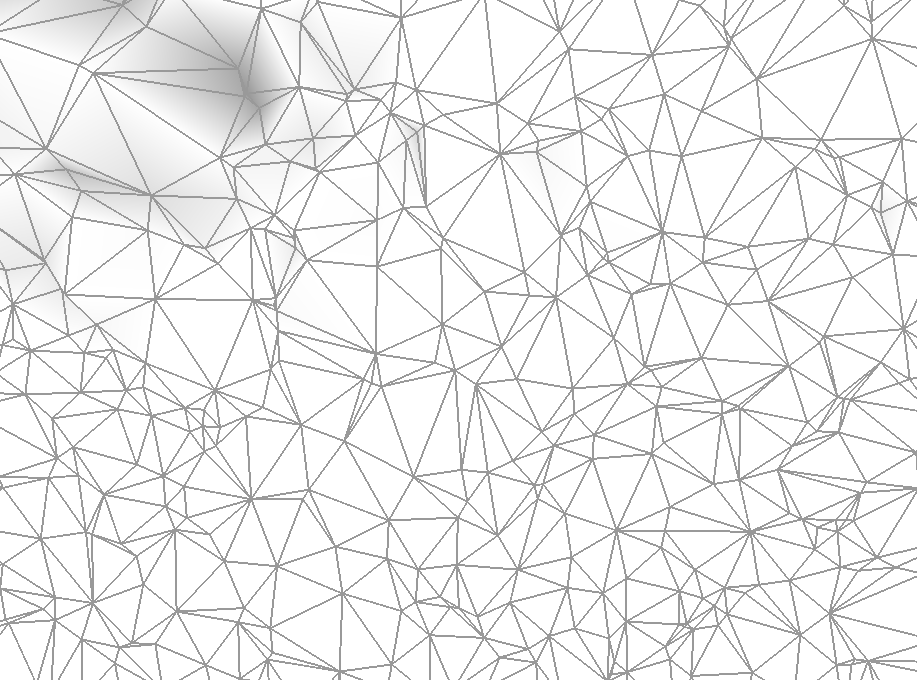} &
		\includegraphics[width=.29\linewidth]{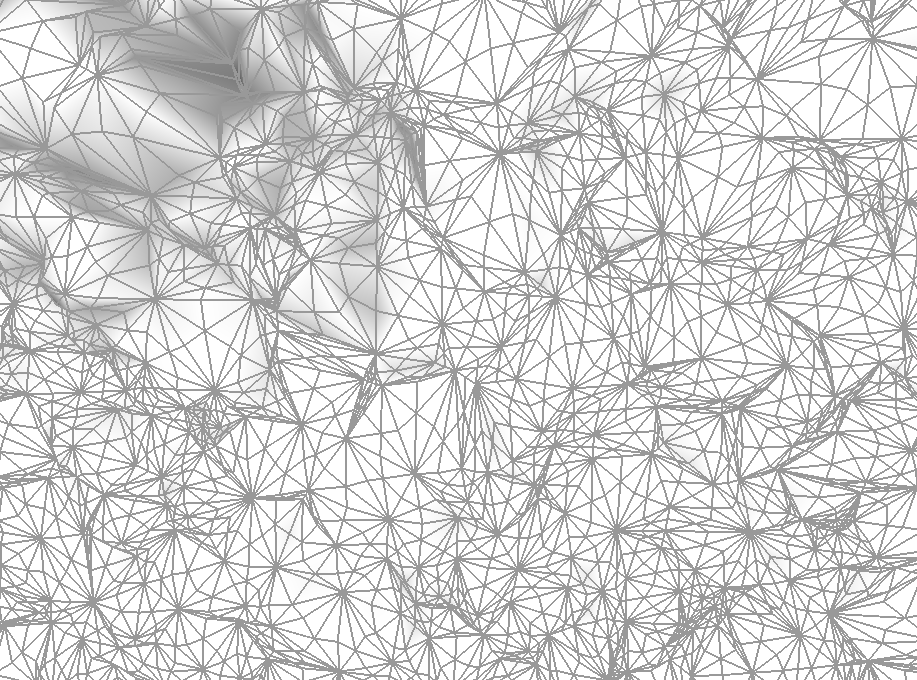} &
		\includegraphics[width=.29\linewidth]{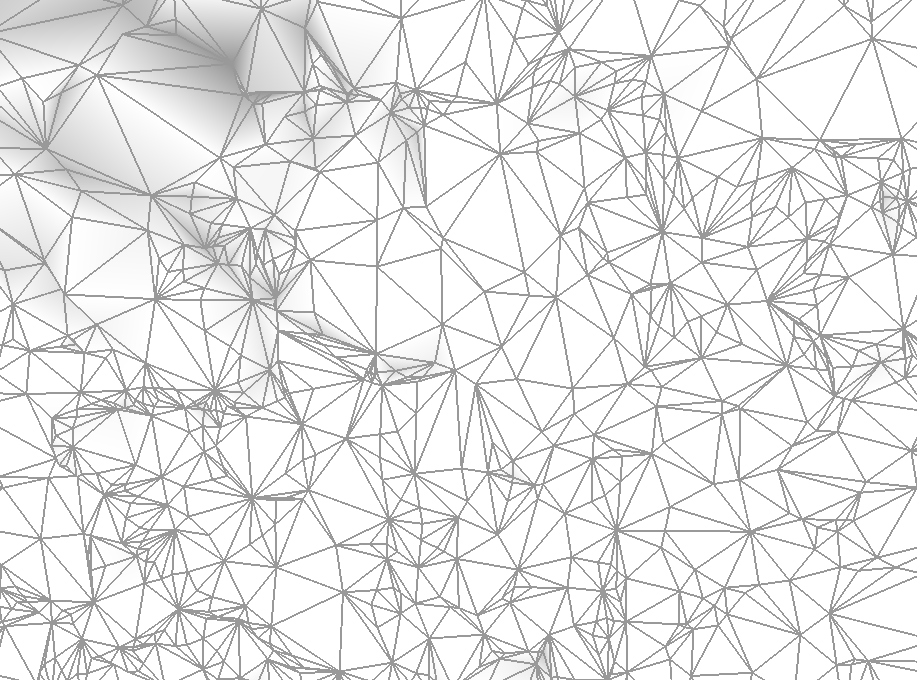}\\
		(a) & (b) & (c)\\
	\end{tabular}
  }
	\caption{A cutout of the terrain in input (a), of its barycentric subdivision (b), and of the terrain obtained by performing the proposed improvement of the BLW algorithm trying to avoid superfluous subdivisions (c).}
  \label{fig:example_Subdivisions}
\end{figure}

The last step of our algorithm improves the quality of the reduced terrain
by flipping edges to increase the minimal angle. In general, only a rather
small fraction of edges is flipped in this step (around 10\% in the above
example). Nevertheless, we can visually see local improvements, as depicted in Figure \ref{fig:example_Mesh-Improvement}.

\begin{figure}[!htb]
	\centering
  \resizebox{0.49\textwidth}{!}{
	\begin{tabular}{cc}
		\includegraphics[width=.46\linewidth]{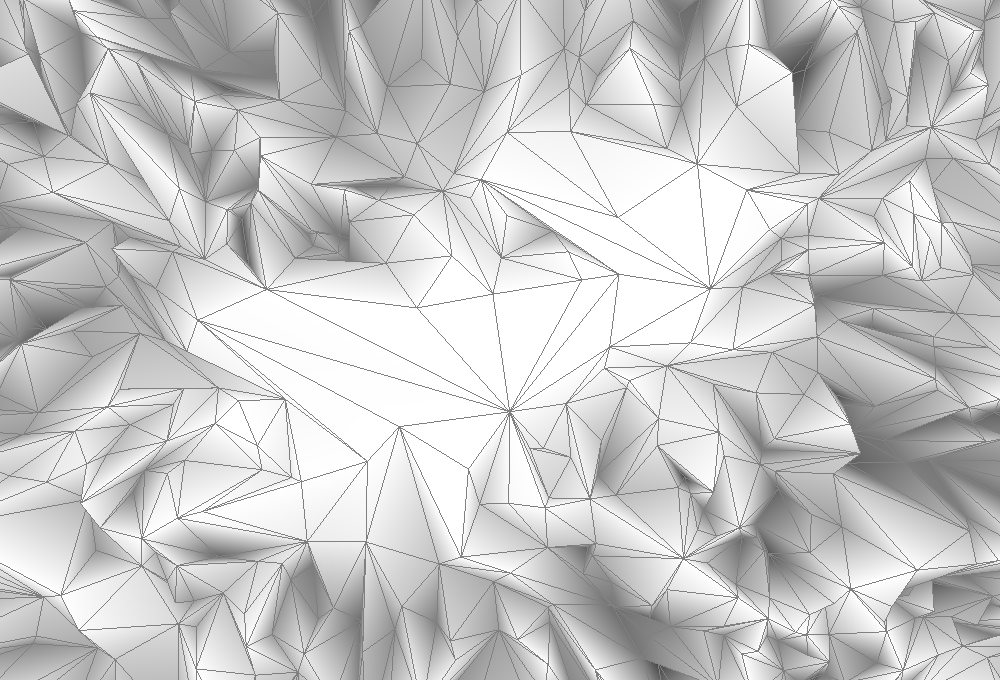} &
		\includegraphics[width=.46\linewidth]{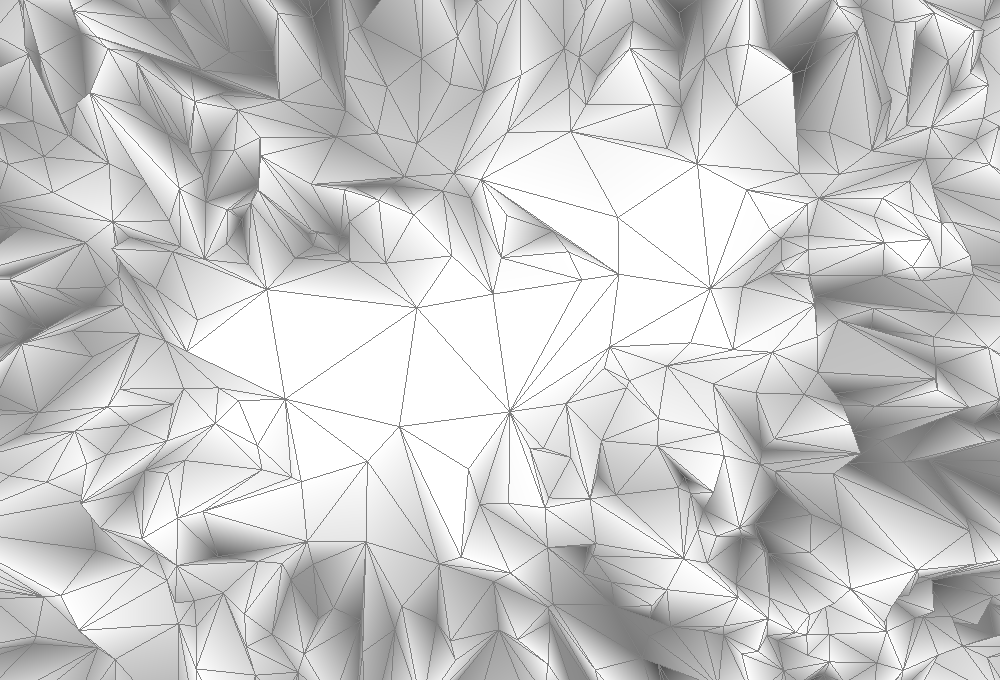} \\
		(a) & (b)\\
	\end{tabular}
  }
	\caption{A cutout of the output terrain before (a) and after (b) the mesh improvement obtained by flipping a selection of edges.}
  \label{fig:example_Mesh-Improvement}
\end{figure}

\paragraph{Performance}
We ran our implementation on a workstation with 6 CPU cores with 3.5 GHz
per core and 64 GB of RAM, running Ubuntu 16.04.5. Our tests do not
exploit the multi-core architecture and run on a single core.

In Table~\ref{tbl:runtimes}, we present the output size and the running time
of our procedure for subsamples of the Styria dataset. The sampled points were chosen uniformly at random. For each size, we ran the algorithm $5$ times and display the average running time and the maximal deviation from the average. For all runs, we picked $\epsilon=100$ meters.\footnote{An analogous table reporting the performances obtained for the lake datasets is presented in \ref{app:exp_results}.}

\begin{table}
\centering
\begin{tabular}{c|c|c|c|c}
I & S & C & O & T\\
\hline
10K & 17K & 407 & 4100.6 ($\pm 9.6$) & 8.90 ($\pm 0.18$)\\
20K & 33K & 543 & 6114.2 ($\pm 49.2$)  & 19.96 ($\pm 0.37$)\\
40K & 64K & 547 & 8654.2 ($\pm 27.2$)  & 38.96 ($\pm 0.36$)\\
80K & 124K & 597& 11223.2 ($\pm 74.8$)  & 81.19 ($\pm 1.79$)\\
160K & 239K & 611 & 13160.2 ($\pm 65.8$) & 168.54 ($\pm 3.74$)\\
320K & 455K & 637 & 15019.0 ($\pm 61.0$) & 342.40 ($\pm 6.12$)
\end{tabular}
\caption{Benchmark results for the Styria dataset. I is the size of the input (number of vertices), S is the size of the subdivision structure (the output of the BLW algorithm) using our improved subdivision strategy, C is the number of critical points of the BLW algorithm (and the output),  O is the output size of our method (number of vertices), and T is the running time of our method (in seconds). For O and T, we also show the largest deviation from the average.}
\label{tbl:runtimes}
\end{table}

First of all, we see that our method creates terrains of much smaller size
compared to the output of the BLW algorithm, and even much smaller than
the input terrain. Our sparser subdivision strategy generally decreases
the number of points by a factor of about $4$ compared
to the full barycentric subdivision whose size
is around six times as large as the input size.
This decrease leads to a a substantial saving of running time.
We also see that the output size and the number of critical points in the output terrain increase slowly compared to the input size. This is not surprising because a finer sampling of the landscape is unlikely to create persistent topological
features or drastic changes in the $L_\infty$-distance.
Moreover, we observe that the deviation from the average is small in all cases,
so the algorithm is rather stable regarding its randomized removal strategy.

The running time of the algorithm splits into about 9-10\% for creating
the initial arrangement data structure out of the input, 9-10\% for
computing the BLW simplification, 78-80\% for the reduction
and around 2\% for the final edge flips (for all problem sizes). Hence,
the reduction step is the bottleneck of the computation, but its performance
is less than an order of magnitude away from the other steps.

We also observe that the running time is slightly super-linear. To further investigate this,
we look at the average running time for an attempted vertex removal (successful or not)
at various stages of the algorithm. Figure~\ref{fig:removal_times} shows the average timings,
averaged over around 1000 attempts. We observe three things: first, the time per attempts
is almost the same at the beginning of the algorithm, independent of the input size. This
shows that the reduction step is indeed a local operation and only depends mildly (i.e., logarithmically)
on the size of the terrain. Second, we see that time per attempt gradually increases during the algorithm,
which is explained by the increased cost of computing a zone in the arrangement when the
edges get longer. Finally, we observe that the running time increases significantly towards
the end of the algorithm. This is due to the fact that the algorithm has more failed
attempts towards the end, and failed attempts tend to be more costly than successful ones
because more triangles need to be checked.

\begin{figure}[!htb]
\begin{center}
    \includegraphics[width=.47\textwidth]{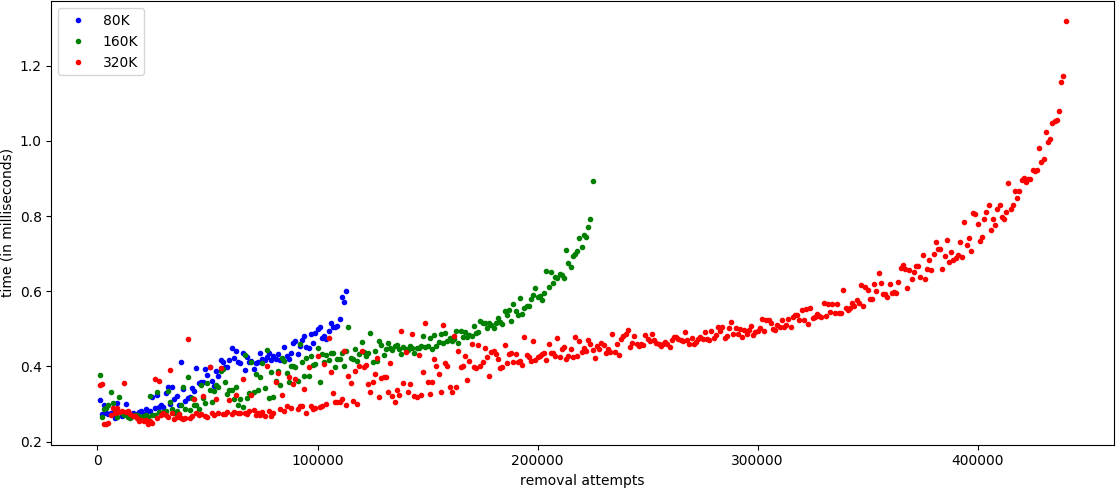}
\end{center}
    \caption{The average time per removal attempt (in milliseconds).}
    \label{fig:removal_times}
\end{figure}

\paragraph{Variants for vertex removal}
We experimented with two variants of our method: recall that in our original
version, when re-triangulating the link of a vertex $v$, we stop when
a valid triangulation is found. Instead, we can also keep checking
and return the valid triangulation with minimal $L_\infty$-distance
to the previous terrain. The idea is that a smaller distance
will be beneficial in subsequent steps to remove more vertices.
As we see in Table~\ref{tbl:heap_vs_greedy}, this strategy indeed yields
a modest decrease in the final number of vertices, in exchange for
a small slow-down of the method.

\begin{table}
\centering
\resizebox{0.49\textwidth}{!}{
\begin{tabular}{c||c|c||c|c||c|c}
 & \multicolumn{2}{c||}{Strategy 1} & \multicolumn{2}{c||}{Strategy 2} & \multicolumn{2}{c}{Strategy 3} \\
I & O & T & O & T & O & T\\
\hline
10K & 4094.6 & 9.00 & 4087.8 & 11.15 & 4063 & 47.73\\
20K & 6120.0 & 19.15 & 6095.0 & 25.00 & 6081 & 128.21\\
40K & 8648.2 & 39.41 & 8619.4 & 53.49 & 8623 & 365.91\\
80K & 11206.4 & 81.23 & 11128.6 & 115.62 & 11142 & 1233.32\\
160K & 13165.4 & 164.83 & 13065.6 & 246.52 & 13090 & 4943.32\\
320K & 15031.0 & 328.15 & 14911.6 & 516.66 & 14915 & 21167.50
\end{tabular}
}
\caption{Comparison of removal strategies for the Styria dataset. I is the size of the input (number of vertices). Columns 2+3: output size and runtime of the original approach (the
numbers differ from Table~\ref{tbl:runtimes} because $5$ new instances
were used). Columns 4+5: output size and runtime for the variant where $v$
is picked at random, but the best re-triangulation is chosen. Columns 6+7: output size and runtime when always the vertex removal with smallest $L_\infty$-distortion is performed.}
\label{tbl:heap_vs_greedy}
\end{table}

We have also implemented and tested
a more elaborate approach where among all candidates that can be removed,
we select the one whose removal causes the smallest $L_\infty$-perturbation
to $\baseterrain$. Similar to the previous variant,
it appears beneficial to remove vertices first
that cause small geometric distortion. This strategy requires, however, to initially check
all vertices of $\terrain$ to determine the best candidate.
Also, after every vertex removal, all neighbors have to be re-checked.
Table~\ref{tbl:heap_vs_greedy} shows that this strategy does not lead
to a noticeable decrease in the output size, but is substantially
increases the runtime. We remark that already initialization phase consumes
about 50\% of the running time of the randomized strategy.
Hence, we do not see a justification for this approach.

\begin{figure}[!htb]
\begin{center}
    \includegraphics[width=.47\textwidth]{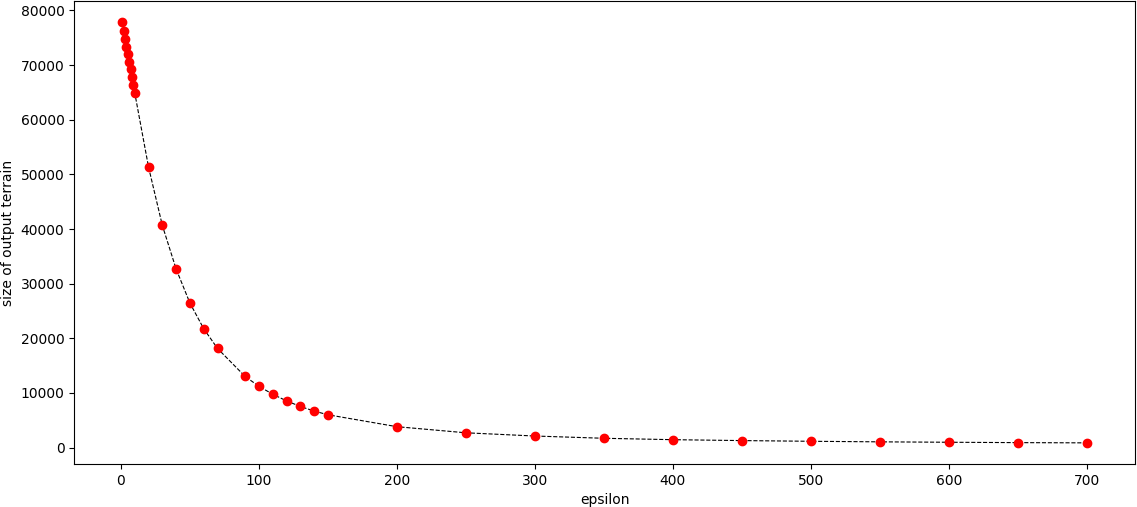}
\end{center}
    \caption{The size of the output terrain (number of vertices) for the $80$K input
dataset, depending on $\epsilon$. The number is averaged over $5$ runs.}
\label{fig:delta-plot}
\end{figure}

\paragraph{The effect of $\eps$}
The output size of our algorithm clearly depends on the chosen $\epsilon$.
In Figure~\ref{fig:delta-plot}, we display the size in dependence
on the values chosen. For very small values of $\epsilon$, the BLW algorithm
does not change the height of any vertex, edge, or triangle, meaning that
our subdivision strategy keeps the input terrain. For very large values
of $\epsilon$, the BLW algorithm removes all critical vertices, and
our reduction method can remove all interior vertices, only leaving the
boundary vertices. For $\epsilon$-values in between, we observe that
the size is monotonously decreasing. For instance, the terrain size
gets halved at around $\epsilon=30$ meters.

\section{Conclusions and discussion}\label{sec-conclusions}
We presented a method to reduce the total size of a topologically simplified
terrain, overcoming a major drawback of previous simplification methods
without giving up on its guarantees. We showed experimentally that the performance
is satisfying using the \textsc{Cgal} library.
We see room for further improvements, for instance,
by using better heuristics to avoid subdivisions in the BLW algorithm,
by using an inexact, yet numerically robust number type, and
by parallelizing our greedy removal procedure.

It is quite simple to see that our removal strategy does not always yield
the smallest possible terrain. While in practice, many vertices can be removed,
the complexity for finding an optimal reduction (for a fixed $\epsilon$)
is open. It would also be interesting to investigate how far our output is from
an optimal solution, both for the worst case and in expectation.

Since \textsc{Cgal}'s arrangement package is able to represent arrangements
on surfaces (such as spheres and tori), it will be simple
to extend our implementation to run on piecewise-linear functions on triangulated surfaces as well. We pose the question whether there are problems
for which a simplification method as described will be useful.

\section*{Acknowledgments}
Partially supported by the Austrian Science Fund (FWF) grant number P 29984-N35, by the Italian MIUR Award ``Dipartimento di Eccellenza 2018-2022'' - CUP: E11G18000350001, and by the SmartData@PoliTO center for Big Data and Machine Learning technologies. The authors thank Ulrich Bauer and Efi Fogel for helpful discussions.

\bibliographystyle{elsarticle-num}
\bibliography{TerrainBiblio}

\begin{thebibliography}{10}
\expandafter\ifx\csname url\endcsname\relax
  \def\url#1{\texttt{#1}}\fi
\expandafter\ifx\csname urlprefix\endcsname\relax\def\urlprefix{URL }\fi
\expandafter\ifx\csname href\endcsname\relax
  \def\href#1#2{#2} \def\path#1{#1}\fi

\bibitem{hg-survey}
P.~S. Heckbert, M.~Garland, Survey of polygonal surface simplification
  algorithms, Tech. rep., Carnegie-Mellon University of Pittsburgh (1997).

\bibitem{talton-survey}
J.~Talton, A short survey of mesh simplification algorithms, Tech. rep.,
  University of Illinois at Urbana-Champaign (2004).

\bibitem{cms-comparison}
P.~Cignoni, C.~Montani, R.~Scopigno, A comparison of mesh simplification
  algorithms, Computers \& Graphics 22~(1) (1998) 37--54.
\newblock \href {http://dx.doi.org/10.1016/S0097-8493(97)00082-4}
  {\path{doi:10.1016/S0097-8493(97)00082-4}}.

\bibitem{blw-optimal}
U.~Bauer, C.~Lange, M.~Wardetzky, Optimal topological simplification of
  discrete functions on surfaces, Discrete {\&} Computational Geometry 47~(2)
  (2012) 347--377.
\newblock \href {http://dx.doi.org/10.1007/s00454-011-9350-z}
  {\path{doi:10.1007/s00454-011-9350-z}}.

\bibitem{edelsbrunner2010computational}
H.~Edelsbrunner, J.~Harer, Computational topology: an introduction, American
  Mathematical Soc., 2010.

\bibitem{ceh-stability}
D.~Cohen-Steiner, H.~Edelsbrunner, J.~Harer, Stability of persistence diagrams,
  Discrete \& Computational Geometry 37 (2007) 103--120.
\newblock \href {http://dx.doi.org/10.1007/s00454-006-1276-5}
  {\path{doi:10.1007/s00454-006-1276-5}}.

\bibitem{cgal-arrangements}
R.~Wein, E.~Berberich, E.~Fogel, D.~Halperin, M.~Hemmer, O.~Salzman,
  B.~Zukerman, {2D} arrangements, in: {CGAL} User and Reference Manual, {5.0}
  Edition, {CGAL Editorial Board}, 2019.

\bibitem{bs-topology}
C.~L. Bajaj, D.~R. Schikore, Topology preserving data simplification with error
  bounds, Computers \& Graphics 22~(1) (1998) 3 -- 12.
\newblock \href {http://dx.doi.org/10.1016/S0097-8493(97)00079-4}
  {\path{doi:10.1016/S0097-8493(97)00079-4}}.

\bibitem{Brem04}
P.-T. Bremer, H.~Edelsbrunner, B.~Hamann, V.~Pascucci, A topological hierarchy
  for functions on triangulated surfaces, IEEE Transactions on Visualization
  and Computer Graphics 10~(4) (2004) 385--396.
\newblock \href {http://dx.doi.org/10.1109/TVCG.2004.3}
  {\path{doi:10.1109/TVCG.2004.3}}.

\bibitem{Wein09}
T.~Weinkauf, D.~G{\"{u}}nther, Separatrix persistence: extraction of salient
  edges on surfaces using topological methods, Comput. Graph. Forum 28~(5)
  (2009) 1519--1528.
\newblock \href {http://dx.doi.org/10.1111/j.1467-8659.2009.01528.x}
  {\path{doi:10.1111/j.1467-8659.2009.01528.x}}.

\bibitem{Gyul11}
A.~Gyulassy, N.~Kotava, M.~Kim, C.~Hansen, H.~Hagen, V.~Pascucci, Direct
  feature visualization using {M}orse-{S}male complexes, IEEE Trans. on
  Visualization and Computer Graphics 18~(9) (2012) 1549--1562.
\newblock \href {http://dx.doi.org/10.1109/TVCG.2011.272}
  {\path{doi:10.1109/TVCG.2011.272}}.

\bibitem{Gunt12}
D.~G{\"u}nther, J.~Reininghaus, H.~Wagner, I.~Hotz, Efficient computation of
  {3D} {M}orse-{S}male complexes and persistent homology using discrete {M}orse
  theory, The Visual Computer 28~(10) (2012) 959--969.
\newblock \href {http://dx.doi.org/10.1007/s00371-012-0726-8}
  {\path{doi:10.1007/s00371-012-0726-8}}.

\bibitem{Comi13-ismm}
L.~{\v Comi\' c}, L.~{De Floriani}, F.~Iuricich, Simplification operators on a
  dimension-independent graph-based representation of {M}orse complexes, in:
  C.~L.~L. Hendriks, G.~Borgefors, R.~Strand (Eds.), ISMM, Vol. 7883 of Lecture
  Notes in Computer Science, Springer, 2013, pp. 13--24.
\newblock \href {http://dx.doi.org/10.1007/978-3-642-38294-9_2}
  {\path{doi:10.1007/978-3-642-38294-9_2}}.

\bibitem{Felle14}
R.~Fellegara, F.~luricich, L.~De~Floriani, K.~Weiss, Efficient computation and
  simplification of discrete {M}orse decompositions on triangulated terrains,
  in: Proceedings of the 22Nd ACM SIGSPATIAL International Conference on
  Advances in Geographic Information Systems, SIGSPATIAL '14, ACM, New York,
  NY, USA, 2014, pp. 223--232.
\newblock \href {http://dx.doi.org/10.1145/2666310.2666412}
  {\path{doi:10.1145/2666310.2666412}}.

\bibitem{iuricich2017hierarchical}
F.~Iuricich, L.~De~Floriani, Hierarchical {F}orman triangulation: a multiscale
  model for scalar field analysis, Computers \& Graphics 66 (2017) 113--123.
\newblock \href {http://dx.doi.org/10.1016/j.cag.2017.05.015}
  {\path{doi:10.1016/j.cag.2017.05.015}}.

\bibitem{Dey201833}
T.~K. Dey, R.~Slechta, {Edge contraction in persistence-generated discrete
  Morse vector fields}, Computers \& Graphics 74 (2018) 33 -- 43.
\newblock \href {http://dx.doi.org/10.1016/j.cag.2018.05.002}
  {\path{doi:10.1016/j.cag.2018.05.002}}.

\bibitem{aghlm-persistence}
D.~Attali, M.~Glisse, S.~Hornus, F.~Lazarus, D.~Morozov, Persistence-sensitive
  simplification of functions on surfaces in linear time, Presented at
  TOPOINVIS 9 (2009) 23--24.

\bibitem{emp-persistence}
H.~Edelsbrunner, D.~Morozov, V.~Pascucci, Persistence-sensitive simplification
  functions on 2-manifolds, in: Proceedings of the Twenty-second Annual
  Symposium on Computational Geometry, SCG '06, ACM, New York, NY, USA, 2006,
  pp. 127--134.
\newblock \href {http://dx.doi.org/10.1145/1137856.1137878}
  {\path{doi:10.1145/1137856.1137878}}.

\bibitem{as-surface}
P.~K. Agarwal, S.~Suri, Surface approximation and geometric partitions, SIAM
  Journal on Computing 27~(4) (1998) 1016--1035.
\newblock \href {http://dx.doi.org/10.1137/S0097539794269801}
  {\path{doi:10.1137/S0097539794269801}}.

\bibitem{ad-efficient}
P.~K. Agarwal, P.~K. Desikan, An efficient algorithm for terrain
  simplification, in: Proceedings of the Eighth Annual ACM-SIAM Symposium on
  Discrete Algorithms, SODA '97, Society for Industrial and Applied
  Mathematics, Philadelphia, PA, USA, 1997, pp. 139--147.

\bibitem{eck1995multiresolution}
M.~Eck, T.~DeRose, T.~Duchamp, H.~Hoppe, M.~Lounsbery, W.~Stuetzle,
  Multiresolution analysis of arbitrary meshes., in: Siggraph, Vol.~95, 1995,
  pp. 173--182.
\newblock \href {http://dx.doi.org/10.1145/218380.218440}
  {\path{doi:10.1145/218380.218440}}.

\bibitem{elz-topological}
H.~Edelsbrunner, D.~Letscher, A.~Zomorodian, Topological persistence and
  simplification, Discrete \& Computational Geometry 28~(4) (2002) 511--533.
\newblock \href {http://dx.doi.org/10.1007/s00454-002-2885-2}
  {\path{doi:10.1007/s00454-002-2885-2}}.

\bibitem{oudot-book}
S.~Oudot, Persistence theory: from quiver representation to data analysis, Vol.
  209 of Mathematical Surveys and Monographs, American Mathematical Society,
  2015.

\bibitem{hatcher}
A.~Hatcher, Algebraic topology, Cambridge University Press, 2002.
\newblock \href {http://dx.doi.org/10.1017/S0013091503214620}
  {\path{doi:10.1017/S0013091503214620}}.

\bibitem{de2000computational}
M.~De~Berg, M.~Van~Kreveld, M.~Overmars, O.~C. Schwarzkopf, Computational
  geometry, algorithms and applications, in: Computational geometry, Springer,
  2000, pp. 1--17.

\bibitem{bauer-thesis}
U.~Bauer, Persistence in discrete {M}orse theory. {P}h{D} thesis (2011).

\bibitem{cgal-kernel}
H.~Br{\"o}nnimann, A.~Fabri, G.-J. Giezeman, S.~Hert, M.~Hoffmann, L.~Kettner,
  S.~Pion, S.~Schirra, {2D} and {3D} linear geometry kernel, in: {CGAL} User
  and Reference Manual, {5.0} Edition, {CGAL Editorial Board}, 2019.

\bibitem{Steiermark}
{Land Steiermark},
  \href{https://www.data.gv.at/katalog/dataset/land-stmk_digitalesgelndemodell10m}{{Styria
  10m digital terrain model}} (2020).
\newline\urlprefix\url{https://www.data.gv.at/katalog/dataset/land-stmk_digitalesgelndemodell10m}

\bibitem{Aim@shape}
D.~S. WorkBench,
  \href{http://visionair.ge.imati.cnr.it/ontologies/shapes/}{{Aim@Shape
  repository}} (2006).
\newline\urlprefix\url{http://visionair.ge.imati.cnr.it/ontologies/shapes/}

\bibitem{cgal-range}
M.~B{\"a}sken, {2D} range and neighbor search, in: {CGAL} User and Reference
  Manual, {5.0} Edition, {CGAL Editorial Board}, 2019.

\end{thebibliography}

\newpage
\appendix

\section{Proof details for persistence-aware triangulations}
\label{app:extended_proof}
We first show the converse of Theorem~\ref{thm:removal}.

\begin{theorem}\label{thm:converse-persistence-aware}
Let $v$ be a regular interior vertex of $\terrain$ of degree $d$. Let $\altterrain$ denote the terrain obtained by removing $v$ and re-triangulating its link
with $d-3$ diagonals. If $\terrain$ and $\altterrain$ have the same persistence diagram, all diagonals are persistence-aware.
\end{theorem}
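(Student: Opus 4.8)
The plan is to prove the contrapositive, exhibiting, exactly as in the ``only if'' part of Proposition~\ref{pro:flip}, a single value $\alpha$ at which the simplex-wise sublevel sets of $\terrain$ and of $\altterrain$ have different Euler characteristics. Since the persistence diagram of a terrain determines the Betti numbers, and hence the Euler characteristic, of its simplex-wise sublevel set at every value, producing such an $\alpha$ shows the diagrams differ. Normalise $\terrainheight(v)=0$ as in the proof of Theorem~\ref{thm:removal} and suppose that a diagonal $ab$ of the re-triangulation, with $\terrainheight(a)<\terrainheight(b)$, is not persistence-aware. The first case of Definition~\ref{def:pers-aware} would make $ab$ persistence-aware, so either $\terrainheight(b)<0$ or $0<\terrainheight(a)$; by the symmetry exchanging the lower and upper links (equivalently, negating all heights and passing to superlevel sets) I would assume $\terrainheight(b)<\terrainheight(v)=0$. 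Unwinding the second case of the definition, the failure of persistence-awareness then says precisely that \emph{each} of the two arcs of the link cycle joining $a$ to $b$ contains a vertex of height strictly above $\terrainheight(b)$.

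I would then take $\alpha:=\terrainheight(b)$ and set $W:=\{u\in\link(v)\mid \terrainheight(u)\le\terrainheight(b)\}$; note $a,b\in W$, and by the observation just made $W$ is a \emph{proper} subset of the vertices of $\link(v)$. Because $\alpha<\terrainheight(v)$, the vertex $v$ and every cell containing it is absent from the sublevel set at $\alpha$ in both terrains; hence this sublevel set acquires no cell of the open star of $v$. Writing $A$ for the (common) part of the sublevel set outside the closed star of $v$, one gets that the sublevel set of $\terrain$ equals $A$, while that of $\altterrain$ equals $A\cup\linktri[W]$, where $\linktri[W]$ is the subcomplex of the re-triangulation $\linktri$ of $\link(v)$ induced by the vertex set $W$, and $A\cap\linktri[W]=C[W]$, the subcomplex of $\link(v)$ induced by $W$ (which is contained in $A$). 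By inclusion--exclusion for the Euler characteristic, the two sublevel sets differ by exactly $\chi(\linktri[W])-\chi(C[W])$.

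The remaining step --- showing this quantity is nonzero --- is the only real obstacle: it amounts to checking that a purely local modification inside the link is not silently cancelled by the rest of the terrain. Here is the bookkeeping I would use. As a proper subcomplex of a cycle, $C[W]$ is a disjoint union of arcs and isolated points, so $\chi(C[W])$ equals its number $c$ of connected components, and $a$ and $b$ lie in \emph{distinct} components of $C[W]$ since each arc between them is broken by a vertex outside $W$. On the other hand $\linktri[W]$ has the same vertex set as $C[W]$ but also contains the edge $ab$ (as $a,b\in W$), which merges two of those components, so $\linktri[W]$ has at most $c-1$ components; being a subcomplex of a triangulated disk it has vanishing top homology, whence $\chi(\linktri[W])\le c-1<c=\chi(C[W])$. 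Thus the sublevel sets at $\alpha$ have different Euler characteristics, so the persistence diagrams differ, and the case $\terrainheight(v)<\terrainheight(a)$ follows by the symmetric argument with superlevel sets.
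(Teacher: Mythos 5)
Your treatment of the lower-link case ($\terrainheight(b)<\terrainheight(v)$) is correct, and it is a genuinely different route from the paper's: the paper picks a single witness vertex $c$ on the link path between $a$ and $b$ and compares the \emph{number of cells entering} the two simplex-wise filtrations at the height of $c$ (an odd difference forces different Euler characteristics at or just before that height), whereas you compare entire sublevel sets at $\alpha=\terrainheight(b)$ via inclusion--exclusion, reducing everything to $\chi(\linktri[W])-\chi(C[W])$ and a component count. Your version is arguably cleaner, since it avoids the bookkeeping of exactly which edges and triangles enter with $c$; the only blemish is the wording ``outside the \emph{closed} star of $v$'' for $A$, which should be the \emph{open} star, since you need $C[W]\subseteq A$ for the formula $\chi(A\cup\linktri[W])-\chi(A)=\chi(\linktri[W])-\chi(C[W])$ you then use.

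The genuine gap is the dispatch of the upper-link case ``by the symmetry exchanging the lower and upper links,'' i.e.\ by negating heights and passing to superlevel sets. The hypothesis of the theorem concerns the persistence diagram of the \emph{sublevel} filtration only; after negating heights it becomes a hypothesis about superlevel persistence, which is not what you are given, and equality of sublevel persistence diagrams does not formally yield equality of superlevel Betti numbers or Euler characteristics (any such transfer is a duality statement that itself requires proof, and on a disk with boundary it is not a clean symmetry). So in the case $\terrainheight(v)<\terrainheight(a)$ your computation shows that two \emph{superlevel} sets have different Euler characteristics, but nothing in the hypothesis forbids that. Nor is the case literally symmetric on the sublevel side: for $\alpha\geq\terrainheight(v)$ the vertex $v$ and part of its star belong to the sublevel set of $\terrain$, so the comparison becomes $\chi(\linktri[W])$ against $\chi$ of a cone (which is $1$) rather than against $\chi(C[W])$, and ruling out $\chi(\linktri[W])=1$ needs more than vanishing $H_2$ --- you would also have to exclude $1$-cycles in $\linktri[W]$. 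The standard repair is the paper's: in the upper-link case take $c$ to be the lowest vertex on the upper-link arc from $a$ to $b$ (its height $t$ satisfies $\terrainheight(v)<t<\terrainheight(a)$), observe that at height $t$ the filtration of $\terrain$ gains $c$ together with the edge $cv$ while that of $\altterrain$ gains only $c$ (the diagonal $ab$ separates $c$ from every lower link vertex, so no diagonal or triangle of $\linktri$ enters with $c$), and conclude that the sublevel Euler characteristics differ at $t$ or immediately before it. With that case argued directly, your proof is complete.
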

\begin{proof}
By reductio ad absurdum, assume that an edge $ab$ is not persistence-aware.
That means, $a$ and $b$ are either both in the lower link or the upper link
of $v$. Let us consider the case first that both are in the upper link.
On the upper link path from $a$ to $b$, there must exist a vertex $c$
whose height value $t:=\terrainheight(c)$ is smaller than the height of $a$ and of $b$.
We claim that $\terrain$ and $\altterrain$ have different
homology in the sublevel set at scale $t-\eps$ or at $t$.
See Figure~\ref{fig:persistence_aware_converse_upper} for an illustration.
In general, $c$ is indeed not connected to any link vertex at scale $t$
in $\altterrain$, just because the edge $ab$ separates it from all vertices in the link with smaller height value.
On the other hand, $c$ is connected to $v$ in $\terrain$. Hence,
at scale $t$, the simplex-wise filtration of $\terrain$ adds exactly $c$ and the edge $cv$, whereas the simplex-wise filtration of $\altterrain$ only adds $c$ (and no edge within the link). Both filtrations might add
additional edges and triangles outside of the link, but these are the same
for both terrains.
It follows that the filtration of $\terrain$ adds exactly one more cell
at scale $t$ than the simplex-wise filtration of $\altterrain$. It follows
(by considering the Euler characteristic and the Euler-Poincar{\'e} formula)
that the homology groups differ either before $t$ or at $t$. This proves
the statement for the upper link case.

\begin{figure}[!htb]
    \centering
    \includegraphics[width=.47\textwidth]{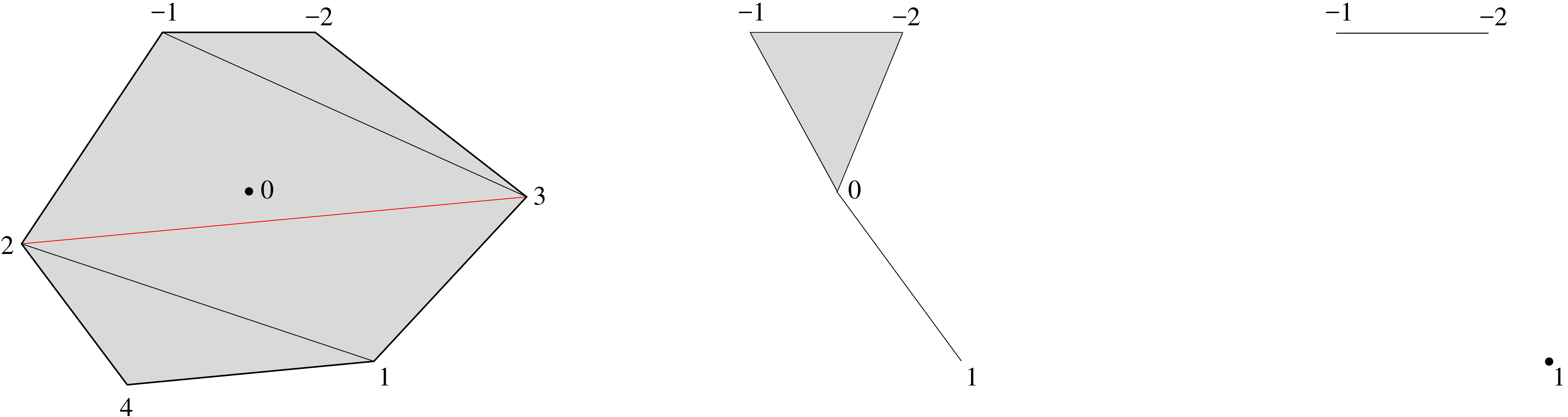}
    \caption{Illustration for the proof of Theorem~\ref{thm:converse-persistence-aware}, upper link case. Left: re-triangulation with a non-persistence-aware edge (in red). Middle: sublevel set of $1$ in $\terrain$. Right: sublevel set of $1$ in $\altterrain$.}
    \label{fig:persistence_aware_converse_upper}
\end{figure}

For the lower link case, let $c$ be the vertex on the lower link path
between $a$ and $b$ with maximal height $t:=\terrainheight(c)$.
Similar to the previous case,
we claim that the simplex-wise filtrations of $\terrain$ and $\altterrain$
add numbers of cells at height $t$ of different parity, implying
that the homology groups have to differ before or after $t$.
See Figure~\ref{fig:persistence_aware_converse_lower} for an illustration.

\begin{figure}[!htb]
    \centering
    \includegraphics[width=.47\textwidth]{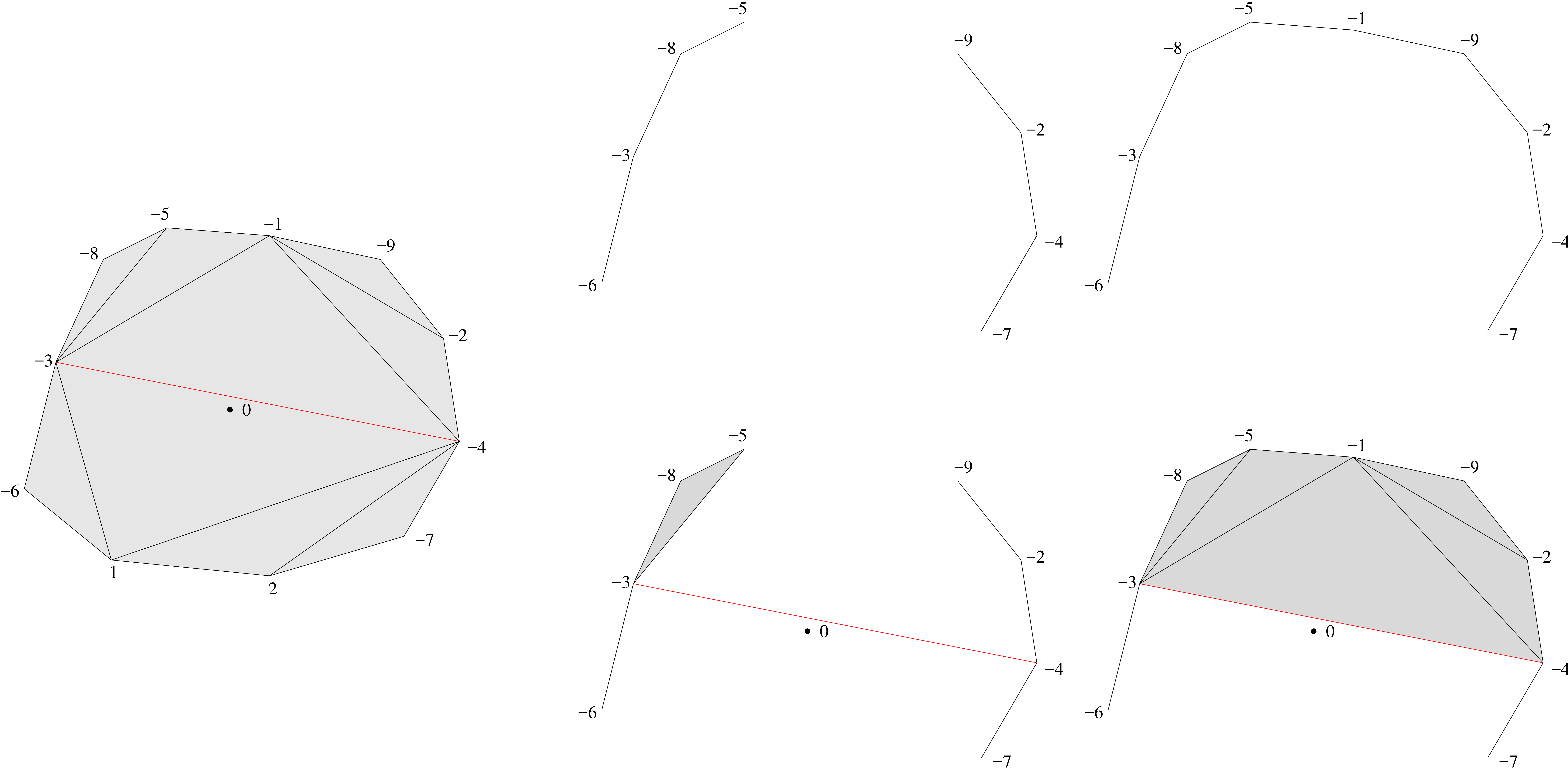}
    \caption{Illustration for the proof of Theorem~\ref{thm:converse-persistence-aware}, lower link case. Left: re-triangulation with a non-persistence-aware edge (in red). Upper row: sublevel sets of $-2$ and $-1$ in $\terrain$. Lower row: sublevel sets of $-2$ and $-1$ in $\altterrain$.}
    \label{fig:persistence_aware_converse_lower}
\end{figure}

Again, it suffices to count the number of cells added within the link,
since both terrains are identical outside the link. For $\terrain$,
at height $t$, the vertex $c$ and its two incident edges on the link
are added, resulting in $3$ cells in total.
In $\altterrain$, the non-persistence-aware edge $ab$ ensures
that $c$ is not adjacent to a link vertex with larger height.
It follows that, at height $t$, the filtration adds all edges incident
to $c$. Let $d$ denote its degree. Also triangles between consecutive
edges incident to $c$ get added, because $c$ is the maximal vertex of the
triangle. This yields $d-1$ triangles in total. It follows that $1$ vertex,
$d$ edges, and $d-1$ triangles are added, resulting in a total of $2d$ addition, which is an even number.
\end{proof}

A vertex of $\terrain$ is a \emph{minimum} if all its neighbors
have larger height, or equivalently, if its upper link is a cycle.
It is a \emph{maximum} if its upper link is empty.
It is \emph{regular} if its upper link is simply-connected.
It is a \emph{$k$-saddle} with $k\geq 2$ if its upper link consists
of $k$ connected components. Every vertex is either a minimum, maximum,
regular points, or $k$-saddle, and we call its type the
\emph{criticality type} of the vertex. Note that we consider
a $2$-saddle and a $3$-saddle to be of different criticality type.
The following statement shows the connection of our method with
the approach by Bajaj and Schikore~\cite{bs-topology}.

\begin{theorem}
Let $v$ be a regular interior vertex of $\terrain$ of degree $d$. Let $\altterrain$ denote the terrain obtained by removing $v$ and re-triangulating its link
with $d-3$ diagonals. All vertices of $\terrain$ (except $v$)
have the same criticality type in $\altterrain$ if and only if all diagonals are persistence-aware.
\end{theorem}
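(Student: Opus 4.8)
The plan is to derive both implications from a single local bookkeeping identity and then to plug in Theorems~\ref{thm:removal} and~\ref{thm:converse-persistence-aware}, which between them already show that persistence-awareness of all diagonals is equivalent to $\terrain$ and $\altterrain$ having equal persistence diagrams.

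The identity I would use is this: since all heights are distinct, the cells entering the simplex-wise filtration at the value $\terrainheight(w)$ are exactly $w$ together with the cells having $w$ as their highest vertex, and a routine count in the link of $w$ shows that the Euler characteristic $\chi_w$ of this set equals $0$ if $w$ is regular, $1-k$ if $w$ is a $k$-saddle, and $1$ if $w$ is a minimum or a maximum. Hence $\chi_w$ determines the criticality type of $w$ with the single exception that it cannot separate a minimum from a maximum (for boundary vertices of $\terrain$ lying on the link of $v$ the same is true under the boundary convention that makes the critical-point/persistence correspondence valid, and I would treat those vertices exactly as that convention dictates). Note also that $\chi_w$ is nothing but the jump of $\alpha\mapsto\chi(S_\alpha)$ at $\alpha=\terrainheight(w)$, so it depends only on the persistence diagram and, since $\terrain$ and $\altterrain$ carry the same vertex heights, can be compared across the two terrains at the corresponding value.

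For ``persistence-aware $\Rightarrow$ criticality types preserved'': vertices not incident to $v$ have unchanged stars and unchanged types, so only the link vertices of $v$ need checking. By Theorem~\ref{thm:removal} the persistence diagrams of $\terrain$ and $\altterrain$ coincide, hence $\chi(S_\alpha)$ agrees for every $\alpha$, hence $\chi_w$ is the same in $\terrain$ and $\altterrain$ for each link vertex $w$ of $v$. By the identity above, $w$ then has the same type in both, unless it is a minimum in one terrain and a maximum in the other. But a link vertex $w$ keeps both of its neighbours along the link polygon of $v$ in $\altterrain$; these two fixed neighbours are both above $w$ in the minimum case and both below $w$ in the maximum case, so the swap is impossible. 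This gives the ``if'' direction.

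For ``criticality types preserved $\Rightarrow$ persistence-aware'' I would prove the contrapositive. If some diagonal $ab$ is not persistence-aware, the proof of Theorem~\ref{thm:converse-persistence-aware} pinpoints a link vertex $c$ of $v$ for which the set of cells entering the filtration at height $\terrainheight(c)$ has a different Euler characteristic in $\terrain$ than in $\altterrain$ — the discrepancy being the edge $cv$ in the upper-link case and the full fan of edges and triangles at $c$ that the bad diagonal forces in the lower-link case, matching the cell counts given there. Thus $\chi_c^{\terrain}\neq\chi_c^{\altterrain}$, and since (again by the neighbour argument) $c$ cannot be a minimum in one terrain and a maximum in the other, $c$ has different criticality types in $\terrain$ and $\altterrain$. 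Hence not all vertices retain their type. The one genuinely delicate point is the local identity together with its minimum/maximum caveat: Euler characteristic alone fixes the type only up to that single ambiguity, so one must appeal to the combinatorics of the link polygon to exclude it, and to the boundary convention to cover boundary link vertices; the rest is elementary counting of cells in a triangulated polygon.
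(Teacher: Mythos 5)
Your proof is correct, but it is organized around a different pivot than the paper's. The paper argues the ``if'' direction combinatorially: it re-uses the flip sequence from the proof of Theorem~\ref{thm:removal} and checks by inspection that each topologically flippable edge flip and the final triangular-link removal leave every link vertex's criticality type unchanged. For ``only if'' it argues directly on the witness vertex $c$: the lower link of $c$ loses the isolated component $\{v\}$, so either the upper link also loses a component or becomes a cycle, and in either case the type changes. You instead route everything through the single invariant $\chi_w$, the Euler characteristic of the cells entering the simplex-wise filtration at $\terrainheight(w)$, observing that it equals $1-\chi(\link^-(w))$ and hence classifies the type up to the minimum/maximum ambiguity, and that it is simultaneously readable off the persistence diagram as the jump of $\alpha\mapsto\chi(S_\alpha)$. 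This lets you quote Theorem~\ref{thm:removal} as a black box for the forward direction (the paper needs its proof, not just its statement), at the cost of the extra min/max disambiguation step, which your link-polygon-neighbour argument handles correctly; note that in the converse direction that step is not even needed, since distinct values of $\chi_c$ already force distinct types. For the converse you do, as you say, still need the internals of Theorem~\ref{thm:converse-persistence-aware} (the explicit cell counts at $c$), because unequal persistence diagrams alone would not yield unequal Euler characteristics. The net effect of your approach is conceptual: it exhibits ``same persistence diagram,'' ``same criticality types,'' and ``all diagonals persistence-aware'' as equivalent because all three control the same family of local Euler-characteristic jumps. Both your write-up and the paper's gloss over boundary link vertices, for which the cyclic-link classification does not literally apply, so you are at the paper's own level of rigor there.
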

\begin{proof}
By inspection of Figures~\ref{fig:edge_flip_proof}
and \ref{fig:triangular_proof} that flipping a topologically flippable edge
and removing a regular vertex with a triangular link does not change
the criticality type of any vertex in the link. In the proof of
Theorem~\ref{thm:removal}, we showed that we can reduce the link
of a regular vertex to a triangle with flips of topologically flippable edges.
Hence, in this sequence, no vertex changes the criticality type, proving
one direction of the theorem.

For the other direction, assume that $ab$ is not persistence-aware.
If $a$ and $b$ are in the upper link, let $c$ be the vertex on the upper link
path between $a$ and $b$ with minimal height.
Let $u$ and $w$ denote the two neighbors of $c$ along the link of $v$.
In $\terrain$, the link of $c$
has a connected component in the lower link that consists only of $v$
because the two neighbors of $v$ on the link of $c$ are $u$ and $w$,
which are both higher than $c$ by the choice of $c$.
In $\altterrain$, $v$ is removed from the link of $c$, and possibly replaced
by a sequence of other vertices in the link of $v$. However, the presence
of the non-persistence-aware edge $ab$ ensures that all vertices that
become incident in $\altterrain$ have a larger height than $c$. It follows
that the lower link of $c$ loses a connected component from $\terrain$
to $\altterrain$, which means that either also the upper link loses a
component or the upper link becomes a cycle. In both cases, the vertex
$c$ changes its criticality type. The argument for a non-persistence-aware
lower link edge is symmetric.
\end{proof}

\section{Algorithmic details}
\label{app:algo_details}

\paragraph{Proof of Lemma~\ref{lem:overlay}}
If the first statement was wrong,
the only remaining possibilities are that a maximum occurs at a point $p$
in a face of $\baseterrain$, or at a point
where $uv$ overlaps with an edge of $\baseterrain$.
In both cases,
both $\simpterrainheight$ and $\baseterrainheight$ along $uv$ around $p$
are constant or linear functions.
There difference is either a linear function (which has
no local maximum) or constant, in which case the same height difference
is also attained at a vertex of either $\baseterrain$ or $\simpterrain$.

If the second statement was wrong, the only possibilities are that
the maximum occurs at a point $p$ on an edge of $\baseterrain$
or inside a face of it. In the former case, $\baseterrainheight$
is a constant or linear function along the edge,
and the same is true for $\simpterrainheight$.
This means that the difference is either a linear function
(contradicting the assumption of a maximum at $p$), or constant,
in which case the maximum is also achieved at a vertex of $\baseterrain$
or the boundary of the triangle. The case that $p$ is in a triangle
of $\baseterrain$ work in the same way, choosing an arbitrary line through $p$.

\paragraph{Placement of subdivision vertices}
We describe first the placement of a vertex of $\terrain$ that corresponds to
an edge $e=uv$ of $\baseterrain$. Let $\terrainheight_u$, $\terrainheight_v$, and $\terrainheight_e$ the function
values returned by the BLW algorithm for the vertices and the edge.
Assume, without loss of generality, that $\terrainheight_u\leq \terrainheight_v$.
By the filtration property,
also $\terrainheight_v\leq \terrainheight_e$ must hold.
With Lemma~\ref{lem:overlay}, it suffices to place the vertex
of $\terrain$ at point $p$ on the interior of the line segment such that
$f_e-\baseterrainheight(p)\leq \eps$, where $\baseterrainheight(p)$
denotes the height of $p$ in the terrain $\baseterrain$.
We try to find such a point $p$ that is as central
as possible between $u$ and $v$.

Note the following problem (also addressed in~\cite{aghlm-persistence}):
if $\terrainheight_e-\baseterrainheight(v)=\eps$ (that is, $e$ is pushed
upwards by the full $\eps$) and $\baseterrainheight(u)<
\baseterrainheight(v)$ (that is, the height along the edge is not constant),
we have that $f_e-\baseterrainheight(p)>\eps$ for every point $p$
in the interior of $uv$,
and a valid placement of the vertex for $e$ is impossible.
We thus assume, for simplicity, that $\eps$ is chosen so that
no persistence pair of $\baseterrain$ has persistence exactly $\eps$.
In that case, there exists $\eps'<\eps$ such that the BLW algorithm
applied on $\eps'$ still removes all persistence points of persistence
$\leq 2\eps$, and every cell changes its function value
by strictly less than $\eps$.

Now, we are looking for $\lambda\in (0,1)$ such that
\[\terrainheight_e-(\lambda\terrainheight_v+(1-\lambda)\terrainheight_u)\leq\eps.\]
A simple calculation yields the condition
\[
\lambda\geq \frac{\terrainheight_e-\eps-\terrainheight_u}{\terrainheight_v-\terrainheight_u}.\]
By assumption, $\terrainheight_e-\terrainheight_v<\eps$, hence the right hand
side is strictly smaller than $1$. We choose
\[\lambda_0:=\max\left\{\frac{1}{2},\frac{\terrainheight_e-\eps-\terrainheight_u}{\terrainheight_v-\terrainheight_u}\right\}.\]
and place the subdivision vertex for edge $e$ at
$\lambda_0 v + (1-\lambda_0)u$.

For a triangle $uvw$, assume that $w$ is the vertex with maximal height.
Let $m=\frac{u+v}{2}$ be the midpoint of $uv$. We then compute a
point $p$ in the interior of the line segment $wm$ (which lies in the interior
of the triangle), such that the distance of $\terrainheight(p)$
and $\terrainheight_e$ is at most $\eps$, with the same formula as above.

\paragraph{Triangular range queries}
Given three points $u$, $v$, $w$, we want
to report all vertices of $\baseterrain$
that are on or inside the triangle $uvw$.
We remark that \textsc{Cgal} offers an algorithm
for this problem
in the \emph{2D range and neighbor search package}~\cite{cgal-range}
which is based on Delaunay triangulations.
More precisely, this algorithm computes all points in the circumcircle
of $uvw$ and checks each encountered point
for being inside the triangle or not. This solution is unsatisfying for
very flat triangles, where the circumcenter becomes so large that
many false positives are found. In that case, the running time of the algorithm
depends on the number of vertices of $\baseterrain$, in theory and in practice.

Instead, we use the following approach:
initially, mark all vertices of $\baseterrain$ as unvisited.
Let $Q$ denote a queue that is initially empty.
For each boundary edge, we compute its zone
in the arrangement $\baseterrain$. The zone contains
all edges of $\baseterrain$ that cross the triangle boundary.
For every such edge, we check its endpoints whether they are unvisited
and in the triangle. If yes, we mark the vertex as visited,
put in into $Q$ and report the vertex.
For all vertices of $\baseterrain$ in the zone, we proceed in exactly the
same way.

Then, we pop elements from $Q$. For each vertex, we traverse its neighbors
in $\baseterrain$. When a neighbor is unvisited an in the triangle,
we mark it as visited, add it to $Q$ and report it. We terminate
when $Q$ is empty.

For the complexity of this approach, note that computing the zone
of a line segment has an (expected) complexity of $O(\log n+z)$,
where $z$ is the size of the zone, where the logarithmic factor
comes from point location for the first endpoint of the line segment
in $\baseterrain$. Denoting by $Z$ the sum of the zones of the three
triangle edges, the complexity is $O(\log n + Z + r)$, with $r$ the number
of vertices reported.

\section{Additional experimental results}
\label{app:exp_results}

Table \ref{tbl:runtimesTwo} presents the output size and the running time of our simplification procedure for uniform subsamples of the lake datasets. We always picked $\epsilon=100$ meters and we display the average execution time of 5 runs and the maximal deviation from the average.

\begin{table}[!hb]
\centering
\resizebox{0.49\textwidth}{!}{
\begin{tabular}{c||c|c|c|c|c}
 & I & S & C & O & T\\
\hline
\multirow{4}{*}{\rotatebox[origin=c]{90}{Garda}} & 10K & 14K & 111 & 2064.6 ($\pm 21.6$) & 8.62 ($\pm 0.07$)\\
& 20K & 29K & 136 & 2776.4 ($\pm 33.4$) & 19.42 ($\pm 1.16$)\\
& 40K & 58K & 122 & 3174.4 ($\pm 47.6$) & 40.28 ($\pm 0.73$)\\
& 80K & 113K & 91 & 3293.0 ($\pm 15.0$) & 82.13 ($\pm 1.64$)\\
\hline
\multirow{4}{*}{\rotatebox[origin=c]{90}{Como}} & 10K & 14K & 181 & 3195.2 ($\pm 9.8$) & 7.44 ($\pm 0.20$)\\
& 20K & 27K & 200 & 4239.4 ($\pm 27.6$) & 16.15 ($\pm 0.28$)\\
& 40K & 53K & 125 & 4775.2 ($\pm 39.8$) & 34.61 ($\pm 0.32$)\\
& 80K & 102K & 148 & 5088.4 ($\pm 21.4$) & 69.70 ($\pm 1.29$)\\
\hline
\multirow{6}{*}{\rotatebox[origin=c]{90}{Maggiore}} & 10K & 15K & 164 & 2551.8 ($\pm 19.8$) & 7.78 ($\pm 0.07$)\\
& 20K & 28K & 156 & 3284.0 ($\pm 18.0$) & 16.44 ($\pm 0.07$)\\
& 40K & 56K & 164 & 4076.6 ($\pm 60.4$) & 35.56 ($\pm 0.62$)\\
& 80K & 109K & 194 & 4581.4 ($\pm 27.4$) & 74.21 ($\pm 0.62$)\\
& 160K & 210K & 237 & 5062.8 ($\pm 28.8$) & 149.93 ($\pm 2.70$)\\
& 320K & 413K & 144 & 4796.8 ($\pm 51.8$) & 325.41 ($\pm 5.22$)
\end{tabular}
}
\caption{Benchmark results for the lake datasets. I is the size of the input, S is the size of the subdivision structure (the output of the BLW algorithm) using our improved subdivision strategy, C is the number of critical points of the BLW algorithm (and the output),  O is the output size of our method, and T is the running time of our method (in seconds). For O and T, we show the largest deviation from the average.}
\label{tbl:runtimesTwo}
\end{table}

\end{document}